\documentclass[12pt]{article}
\usepackage{amsmath,amsfonts, amsthm,amssymb,hyperref,verbatim}
\newtheorem{prop}{Proposition}
\newtheorem{lemma}{Lemma}
\newtheorem{thm}{Theorem}
\newtheorem*{rmk}{Remark}
\newtheorem*{rmks}{Remarks}
\newtheorem{cor}{Corollary}

\begin{document}

\title{Bounds on the Lyapunov exponent via crude estimates on the density
of states}
\author{Mira Shamis\textsuperscript{1} and Thomas Spencer\textsuperscript{2}}
\maketitle
\begin{abstract}  We study the Chirikov (standard) map
at large coupling $\lambda \gg 1$, and prove that the Lyapounov exponent
of the associated Schr\"odinger operator is of order
$\log \lambda$ except for a set of energies of measure
$\exp(-c \lambda^\beta)$ for some $1 < \beta < 2$.
We also prove a similar (sharp) lower bound on the Lyapunov exponent
(outside a small exceptional set of energies) for a large family
of ergodic Schr\"odin\-ger operators, the prime example
being the $d$-dimensional skew shift.
\end{abstract}
\footnotetext[1]{Department of Mathematics, Princeton University, Princeton, NJ 08544, USA, and School of Mathematics, Institute for Advanced Study,
Einstein Dr., Princeton, NJ 08540, USA. E-mail: mshamis@princeton.edu. Supported in part by NSF grants PHY-1104596 and DMS-1128155.}
\footnotetext[2]{School of Mathematics, Institute for Advanced Study,
Einstein Dr., Princeton, NJ 08540, USA. E-mail: spencer@ias.edu.}

\section{Introduction}

\subsection{Standard map}

The Chirikov or Standard map is the two-dimensional area preserving dynamical system $T$ of the torus $\mathbb{T}^2 = (\mathbb{R} / 2\pi \mathbb{Z})^2$ to itself defined by
\[ T(x_1, x_2) = (x_2, 2x_2 + \lambda \sin x_2 - x_1)~.\] 
For small $\lambda$ the Chirikov map is known to have a set of invariant curves on which the motion is quasi-periodic. This follows from theory of Kolmogorov, Arnold and Moser (KAM). In fact by a theorem of Duarte \cite{D1,D2}, elliptic  islands are known to appear for an open dense set of $\lambda$. 

One of the major problems in dynamical systems is to prove that there is a set of initial conditions of \textit{positive measure} for which the dynamics is chaotic, 
i.e.\ the Lyapunov exponent is positive.  Equivalently, the  Kolmogorov--Sinai metric entropy $h(T)$ is conjectured to be positive, and to be
of order $\ln \lambda$ when $\lambda \gg 1$. The conjecture remains unproved for any value of $\lambda$. 

\vspace{1mm}\noindent
The orbit of the Chirikov map is determined by its initial condition $(x_{-1}, x_0)$ and may be expressed in the form
 
\[ \cdots, (x_{-1}, x_0), (x_0, x_1), (x_1, x_2), \cdots~, \]
where $x_j$ satisfy the equation for the discrete time pendulum
\[ (\triangle x)_n=  x_{n+1} + x_{n-1}  - 2 x_n = \lambda \sin x_n~.\]
Setting 
\[ \psi(n) = \frac{\partial x_n}{\partial x_0}~,\]
we obtain an equation for the linearization about an orbit:
\[  H_\omega \psi = \left(\frac{1}{\lambda} \triangle  - \cos x_n \right)\psi = 0 .\]
Here $H_\omega$ is the discrete Schr\"odinger operator associated with the map $T$. The potential $v=- \cos x_n$ is
evaluated along the orbit above and will depend on the initial condition $\omega = (x_{-1}, x_0)$.
By Pesin's entropy formula \cite{HP},
the metric entropy of $T$ is equal to the integral over $(x_{-1}, x_0)$ of the Lyapunov exponent of $H$ at energy 0 (which corresponds to energy $E=2\lambda^{-1}$ after the diagonal part of the Laplacian is incorporated in the potential).
The precise definition of the Lyapunov exponent (in the non-ergodic setting) is given in
Section~\ref{s:prelim} below. We can also define the average Lyapunov exponent $\gamma(E)$ at energy $E$ by studying
the related equation $ (H - E) \psi = 0$. 

Our main theorem roughly states that $\gamma (E) \approx  \log \lambda$
except for a set of $E$ of Lebesgue measure less than  $ e^{-c\lambda^{\beta}}$ for $1< \beta <2$.   
 
\begin{thm}\label{thm:stdmap} Fix $\epsilon > 0$. For any $\beta < 4/3$ and  sufficiently large $\lambda$,
\[ \mathrm{meas}\, \left\{ -1/2 \leq E \leq 1/2 \, \mid \, \gamma(E) < (1 - \epsilon) \ln \lambda \right\} 
	< C_{\epsilon} \exp\left[- C_{\epsilon}^{-1} \lambda^{\beta} \right]~. \]
For any $\beta < 2$, and any $E_0 \in (-1, 1)$, there exists a set $\Lambda(E_0)$ of $\lambda$ so that
for every $\lambda \in \Lambda(E_0)$
\[ \mathrm{meas}\, \left\{ E_0 - \lambda^{-2} \leq E \leq E_0 + \lambda^{-2} \, \mid \, 
	\gamma(E) < (1-\epsilon) \ln \lambda  \right\} 
	< C_{\beta,\epsilon} \exp\left[- C_{\beta,\epsilon}^{-1} \lambda^{\beta}\right]~. \]
and 
\[ \mathrm{meas} \left\{ \Lambda(E_0) \cap [2\pi \ell, 2\pi (\ell+1)] \right\} \to 2\pi \quad 
\text{as} \quad \ell \to \infty~. \]
\end{thm}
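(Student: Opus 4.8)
The plan is to convert the statement about the Lyapunov exponent into one about the \emph{concentration of the density of states} via the Thouless formula. For the operator $H=\frac1\lambda\triangle-\cos x_n$, whose off-diagonal coupling is $1/\lambda$, the Thouless formula reads
\[ \gamma(E)=\log\lambda+\int\log|E-E'|\,dN(E'), \]
with $N$ the (averaged) integrated density of states. Hence $\gamma(E)<(1-\epsilon)\log\lambda$ is equivalent to the logarithmic potential $U(E):=\int\log|E-E'|\,dN(E')$ falling below $-\epsilon\log\lambda$, and since $U$ is very negative precisely where $N$ clusters, the whole problem reduces to controlling how much mass $N$ places in the tiny window $J:=[E_0-\lambda^{-2},E_0+\lambda^{-2}]$ and how sharply the eigenvalues can pile up there.

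The main engine is a Cartan/Remez-type estimate in finite volume. Writing $U(E)=\lim_L\frac1L\log|\det(E-H_L)|=\lim_L\frac1L\sum_j\log|E-E_j^{(L)}|$ and splitting off the eigenvalues inside $J$, the contribution of eigenvalues at distance $\ge\lambda^{-2}$ is bounded below by an absolute constant, since the crude (bounded-density) density-of-states bound of the title controls the spread of the far eigenvalues and $\int_0^2|\log r|\,dr<\infty$. Thus $U(E)<-\epsilon\log\lambda$ can only be caused by the few eigenvalues inside $J$, whose number is $m\approx L\,N(J)$. For the monic polynomial $\prod_{E_j\in J}(E-E_j)$, Cartan's lemma bounds the measure of $\{|\prod|<\lambda^{-\epsilon L}\}$ by $C\lambda^{-\epsilon L/m}=C\lambda^{-\epsilon/N(J)}$. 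Consequently, as soon as $N(J)\le\lambda^{-\beta}$, the exceptional set inside $J$ has measure at most $\exp(-c\,\lambda^{\beta}\log\lambda)\le\exp(-c\,\lambda^{\beta})$. The exponent $\beta<2$ is produced exactly by the window width $\lambda^{-2}$: the typical value $N(J)\sim\lambda^{-2}$ would even give $\exp(-c\,\lambda^{2}\log\lambda)$, so all the slack is spent guarding against couplings for which $N(J)$ is anomalously large.

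Accordingly I would define $\Lambda(E_0)$ to be the set of couplings for which the mild bound $N([E_0-\lambda^{-2},E_0+\lambda^{-2}];\lambda)\le\lambda^{-\beta}$ holds; for such $\lambda$ the first display of the statement follows from the previous paragraph. It remains to prove that $\Lambda(E_0)$ has full density, i.e.\ the second display. The natural route is an \emph{averaged Wegner estimate in the coupling}: one shows that
\[ \int_{2\pi\ell}^{2\pi(\ell+1)} N\!\big([E_0-\lambda^{-2},E_0+\lambda^{-2}];\lambda\big)\,d\lambda\ \le\ C\,\lambda^{-2}+o(\lambda^{-2})\qquad(\ell\to\infty), \]
the point being that averaging $\lambda$ over a full period $2\pi$ drives the potential $-\cos x_n$ through all phases and regularizes the density of states toward its leading arcsine value $\big(\pi\sqrt{1-E_0^2}\big)^{-1}$. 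Chebyshev's inequality then shows that the set of $\lambda$ in the period violating $N(J)\le\lambda^{-\beta}$ has measure $\lesssim\lambda^{-2}/\lambda^{-\beta}=\lambda^{\beta-2}\to0$, whence $\mathrm{meas}(\Lambda(E_0)\cap[2\pi\ell,2\pi(\ell+1)])\to2\pi$.

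The hard part is this averaged Wegner estimate. Because the Chirikov map is not known to be ergodic, I cannot simply invoke equidistribution of the orbit; the $\lambda$-average must instead be controlled directly, exploiting that varying $\lambda$ moves the potential configuration and that $N(J;\lambda)$ is, on average over a period, governed by the smooth arcsine density rather than by rare resonant runs of near-degenerate sites. Making this quantitative, and uniform as $\ell\to\infty$ so that the density tends to the full $2\pi$, is the crux; the Cartan step and the Thouless reduction are comparatively routine once this input is in hand. (The same scheme, fed with the coarser unit-scale bound on $N(J)$ available for \emph{all} large $\lambda$ rather than only for $\lambda\in\Lambda(E_0)$, is what yields the weaker exponent $\beta<4/3$ of the first display.)
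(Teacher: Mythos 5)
Your first step---reducing the theorem via the Thouless formula to a bound on the density of states in windows of width $\delta=\lambda^{-2}$, with a Cartan-type estimate converting smallness of the local DOS mass into an exponentially small exceptional set---is sound and is essentially the paper's Proposition~\ref{prop:thj} (their shell decomposition plays the role of your Cartan lemma, and the numerology matches: $\rho[E-\delta,E+\delta]\le C\delta^\alpha$ at $\delta=\lambda^{-2}$ gives measure $\exp(-c\epsilon\lambda^{2\alpha}\ln\lambda)$, i.e.\ $\beta=2\alpha$). But note one technical flaw even at this stage: controlling only $N(J)$ for the single window $J$ does not suffice. Eigenvalue mass at intermediate distances $r\in(\lambda^{-2},1)$ from $E$ contributes $\approx \rho(\text{shell})\ln(1/r)$ to the logarithmic potential, and your claim that the far eigenvalues contribute an absolute constant invokes a ``crude bounded-density'' bound on the DOS---which is exactly what is \emph{not} known for the standard map; that is circular. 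The paper's quantity $g(\delta)=\max\bigl(\delta,\sup_{|E-E_0|\le\xi}\rho[E-\delta,E+\delta]\bigr)$, a supremum over all nearby window centers, is what actually makes the shell sum close, and the only regularity ever proved is the $\alpha$-H\"older-type bound at the single scale $\lambda^{-2}$.

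The genuine gap is that the entire analytic content of the theorem---proving that DOS bound and identifying the exceptional set of couplings---is deferred in your proposal to an ``averaged Wegner estimate in $\lambda$'' which you acknowledge as unproven, plus an unsupported ``coarser unit-scale bound'' for the $\beta<4/3$ statement. The paper never averages over $\lambda$ and needs no equidistribution: it bounds $\rho(E-\delta,E+\delta)\le(2\delta)^\alpha\int|G_\omega(E+i\delta;0,0)|^\alpha d\mu(\omega)$ (Lemma~\ref{l:bdrho}), reduces the Green function by the resolvent identity and Cramer's rule to the explicit $3\times3$ determinant $\Delta_3$, which requires only \emph{one step} of the orbit, $x_{-1}=2x_0+\lambda\sin x_0-x_1$, and then estimates $\int_{A_a}|\Delta_3|^{-\alpha}dx_0\,dx_1$ by a resonance analysis: the dangerous region is where $\theta(x_0)=2x_0+\lambda\sin x_0$ is close mod $2\pi$ to $\{0,\pm2\arccos(-E)\}$, and everything hinges on lower bounds for $|\bar{h}(x)|$, $h(x)=\lambda\cos x+2x$, near its nearly-double zeros $x_{\ell_0}^\pm$. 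This yields $\alpha<2/3$ unconditionally (because even at worst resonance $|x_{\ell_0}^\pm|\ge c/\lambda$ or $|x_{\ell_0}^+-x_{\ell_0}^-|\ge c/\lambda$), whence $\beta<4/3$; and $\alpha<1$ whenever $\bar\lambda=\lambda \bmod 2\pi$ stays at distance $\ge\lambda^{-\delta}$ from a finite set of phases (for $E_0=0$, $\{0,\pi\}$---these resonances correspond to elliptic islands), whence $\beta<2$ and the explicit, deterministic set $\Lambda(E_0)$ with per-period excluded measure $\sim\ell^{-\delta}\to0$. Your proposal has no mechanism producing the exponent $4/3$, and the averaging heuristic (``varying $\lambda$ drives the potential through all phases'') is dubious because changing $\lambda$ changes the orbits themselves, not merely a phase. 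As written, the crux of the theorem is assumed rather than proved.
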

Here and further $\mathrm{meas}$ denotes Lebesgue
measure.

\begin{rmks}\hfill
\begin{enumerate}
\item We are mostly interested in energy $E_0 = 0$, since the metric entropy is equal to the average Lyapunov exponent at $E = 0$. 
\item The interval $(-1/2, 1/2)$ in the first estimate can be extended to $(-1 + \xi, 1-\xi)$, for an arbitrary
$\xi > 0$.
\item The set $\Lambda(E_0)$ may be chosen to be 
the complement of the Minkowski (element-wise) sum
\[ A + \bigcup_{\ell = 1}^\infty (2\pi \ell -\ell^{-\delta}, 2 \pi \ell + \ell^{-\delta}) \] 
for some finite set $A \subset (-\pi,\pi]$ depending on $E_0$ and some small $\delta > 0$ (see the proof of Lemma~\ref{l:bound.K}). For $E_0 = 0$, one
may choose $A = \{0, \pi\}$.
\item Instead of eliminating a set of energies of small measure we could have added a random potential (or noise) to the Schr\"odinger operator with variance 
$\approx e^{ - \lambda ^{\beta}}$.  
\item For any $E$, one has the complementary inequality $\gamma(E) \leq \ln \lambda + C$; see Section~\ref{s:prelim}. This remark also applies to the
setting of Theorem~\ref{th2} below.
\end{enumerate}
\end{rmks}

\vspace{2mm}\noindent
Our proof relies on a formula of Jones and Thouless relating the density of states $\rho(E)$ to the Lyapunov exponent
$\gamma(E)$. To get a lower bound on $\gamma(0)$ it is sufficient get some mild H\"older regularity of $\rho$ near $E=0$ for large $\lambda$. This
observation, in  more general context,
goes back to the work of Avron, Craig, and 
Simon \cite{ACS}.
Although H\"older regularity of the density
of states is not known for the standard
map, we obtain our theorem by bounding $\rho (-\delta, \delta)$ for  small $\delta$ depending on $\lambda$.

In \cite{Sp}, the second-named author proved an inequality similar to that of Theorem~\ref{thm:stdmap} with the bound $\approx e^{- C^{-1} \frac{\lambda}{\ln \lambda}}$
on the size of the exceptional set. While our argument is
based on the strategy of \cite{Sp}, the latter only uses the regularity of the distribution of $f$, whereas
our result requires an analysis of resonances. 

There is also partial  unpublished work of Carelson and the second-named author \cite{CaSp} which tried to extend
the  ideas of Benedicks and Carleson  \cite{BC} on the H\'enon map to the standard map.  See also the paper by 
Ledrappier et al.\ \cite {LSSW} for other approaches to metric entropy. 

In \cite{B}, Bourgain showed that, for sufficiently large $\lambda$, $\gamma(E)$ is positive outside
a set of energies of zero measure. While Bourgain's result has a much smaller exceptional set
than Theorem~\ref{thm:stdmap}, his method does not yield an explicit lower bound on the Lyapunov
exponent.

A recent paper by Gorodetski \cite{Gor} proves
that there is a set  of Hausdorff dimension $2$ for which the Lyapunov exponent is positive for all $\lambda>0$. 
This paper also gives an up to date overview of  the dynamics of the standard map.

\subsection{General bounds on the Lyapunov exponent}

The second result of this paper pertains to a large family of Schr\"odinger operators. The setting is
as follows. $(\Omega , \mathcal{B},\mu)$ is an arbitrary probability space, $T: \Omega \to \Omega$ is
a measure-preserving map, and the one-dimensional Schr\"odinger operators $H_\omega$, $\omega \in \Omega$,
are constructed via
\[ (H\psi)(n) = \lambda^{-1} (\psi(n-1)+\psi(n+1)) + V(n)\psi(n)~,\]
where
\[ V(n) = V_\omega(n) = f(T^n\omega) \]
for some bounded measurable $f:\Omega \to \mathbb{R}$. The definition of the Lyapunov exponent associated
with $H$ and $E \in \mathbb{R}$ is discussed in Section~\ref{s:prelim}.

\begin{thm}\label{th2}
Let $\ell \geq 1$ be a natural number, and assume that $2\ell-1$  consequtive values $V(-\ell+1),V(-\ell+2),\cdots,V(\ell-1)$
of the potential are independent bounded random variables, and that the restriction of their common distribution
to an interval $[a, b]$ has bounded density. Then, for any closed interval $I \subset (a, b)$ and large $\lambda \gg 1$,
\[ \left| \left\{ E \in I \, \mid \, \gamma(E) \leq \ln \lambda - C \ln \ln \lambda \right\} \right| 
  \leq C \exp \left[ - \lambda^\ell \frac{\ln \ln \lambda}{(C \ln \lambda)^{2\ell-1}} \right]~.\]
\end{thm}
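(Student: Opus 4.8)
The plan is to combine the Thouless formula with the fact that, at large coupling, the density of states is only a small perturbation of the distribution $\nu$ of the potential values $V(n)$, and to control that perturbation by a resonance analysis that uses exactly the local independence furnished by the hypothesis. This mirrors the dichotomy flagged above: the regularity of the law of $f$ supplies the main term, while the genuinely new ingredient is the treatment of resonances.

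First I would write the Thouless formula, as used for Theorem~\ref{thm:stdmap}, in the form $\gamma(E) = \ln\lambda + g(E)$, where $g(E) = \int \ln|E - E'|\, d\rho(E') = \lim_{N\to\infty} \tfrac1N \ln|D_N(E)|$ and $D_N(E) = \det(H_N - E)$ is the characteristic polynomial of the restriction $H_N$ of $H$ to a box of size $N$. Since $\gamma(E) \le \ln\lambda - C\ln\ln\lambda$ is equivalent to $g(E) \le -C\ln\ln\lambda$, the task is a lower bound on the logarithmic potential $g$. I would then split $g = g_0 + (g - g_0)$, where $g_0(E) = \lim_N \tfrac1N \sum_n \ln|V(n) - E| = \int \ln|v - E|\, d\nu(v)$ by the ergodic theorem, $\nu$ being the common law of the potential values. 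Because $\nu$ has bounded density on $[a,b]$ and $I \subset (a,b)$ is compact, the logarithmic singularity is integrable and $g_0(E) \ge -C$ for every $E \in I$; this is the step that upgrades the mere regularity of the law of $f$ into the main contribution $\gamma(E) \gtrsim \ln\lambda$. Everything then reduces to showing that the correction $g(E) - g_0(E)$ is bounded below by $-C\ln\ln\lambda$ for all $E \in I$ outside an exceptionally small set.

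The correction is governed by resonances. Factoring the recursion $D_N = (V(N)-E)D_{N-1} - \lambda^{-2} D_{N-2}$ through $\prod_n (V(n)-E)$, the normalized determinant $P_N := D_N / \prod_{n\le N}(V(n)-E)$ obeys $P_N = P_{N-1} - u_N P_{N-2}$ with $u_n = \lambda^{-2}\big[(V(n)-E)(V(n-1)-E)\big]^{-1}$, and $g(E) - g_0(E) = \lim_N \tfrac1N \ln|P_N|$. The coefficients $u_n$ are of size $\lambda^{-2}$ unless $V(n)$ or $V(n-1)$ lies within $O(\lambda^{-1})$ of $E$, so $P_N$ can be driven anomalously small — making the correction very negative — only across a resonant cluster, a block of consecutive sites whose potential values all lie in a short window around $E$. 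The deterministic heart of the argument is to quantify this: I would show, via the continued-fraction (Green's-function) expansion, that a drop of $\tfrac1N\ln|P_N|$ below $-t$ forces a resonant cluster confined to a block of at most $2\ell-1$ sites on which the values lie within a window $w$ of $E$, the exterior of the block being suppressed to the resolution scale $\eta \sim \lambda^{-c\ell}$ by the hopping decay (localization length $\sim 1/\ln\lambda$). Restricting to blocks of length $2\ell-1$ is forced on us precisely because only $2\ell-1$ consecutive values are independent.

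Finally I would insert the probability. By the independence of $V(-\ell+1),\dots,V(\ell-1)$ and the bound on their common density, the asymptotic frequency of blocks that are resonant at window $w$ is at most $(Cw)^{2\ell-1}$; converting this (through the determinant expansion and the resolution $\eta$) into a bound on the Lebesgue measure of energies at which $g - g_0$ falls below $-C\ln\ln\lambda$, and optimizing the free parameter $w$, should yield the stated estimate, with $\lambda^\ell$ arising from the $\lambda^{-2}$ per bond balanced against the window, the factor $(\ln\lambda)^{2\ell-1}$ from the localization length entering over the $2\ell-1$ sites of the block, and $\ln\ln\lambda$ playing the role of the target gap $t$. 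The main obstacle is the deterministic resonance step: establishing, uniformly in $E$, that $\tfrac1N\ln|P_N|$ can be made very negative \emph{only} through short resonant clusters — so that the influence of longer chains is either deterministically negligible, by the $\lambda^{-1}$ decay of the off-diagonal terms, or can be charged to the exceptional set — and then matching the exact powers of $\lambda$ and $\ln\lambda$ by balancing the window $w$, the block length $2\ell-1$, and the resolution $\eta$.
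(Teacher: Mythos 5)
Your sketch starts from the right place (the Thouless formula, and the observation that the bounded density of the law of the potential controls the logarithmic potential on $I$), but the heart of your argument --- the deterministic claim that $\tfrac1N\ln|P_N|$ can fall below $-t$ \emph{only} through a resonant cluster of at most $2\ell-1$ sites --- is a genuine gap, and as a pointwise statement in $E$ it is false. For fixed real $E$, $D_N(E)=\det(H_N-E)$ vanishes at the eigenvalues of $H_N$, which sweep through every point of the spectrum as $N$ grows; thus $P_N$ can be made astronomically small by spectral cancellations with no site having $V(n)$ near $E$, and the identity $g(E)=\lim_N \tfrac1N\ln|D_N(E)|$ on which your decomposition rests holds only for a.e.\ $E$ (Craig--Simon type subtlety) --- the exceptional energies are hiding exactly in the set where it fails, so it cannot be invoked uniformly. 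Moreover, even granting the cluster dichotomy, your probabilistic conversion does not produce the stated bound: the frequency estimate $(Cw)^{2\ell-1}$ for resonant windows is only polynomially small in $w$, and since the hypothesis gives independence only \emph{within} a single window of $2\ell-1$ consecutive sites (distinct windows may be strongly correlated, as for the skew shift), no large-deviation or Borel--Cantelli argument over many blocks is available; ``optimizing $w$'' cannot by itself manufacture the exponential factor $\exp[-\lambda^\ell\ln\ln\lambda/(C\ln\lambda)^{2\ell-1}]$.

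The paper's proof avoids large-$N$ determinants entirely and works at a fixed complex energy $E+i\delta$ with $\delta=\lambda^{-\ell}$. By Lemma~\ref{l:bdrho}, it suffices to bound $\int|G_\omega(E+i\delta;\ell-1,\ell-1)|\,d\mu$, since this gives $\rho(E-\delta,E+\delta)\leq 2\delta\,(C\ln\lambda)^{2\ell-1}$. The resolvent identity (\ref{eq:resolv}) confines the estimate to the block $\{0,\dots,2\ell-2\}$ of independent values: by Cramer's rule the off-diagonal entries of the block Green's function carry a factor $\lambda^{-(\ell-1)}$, which together with the hopping $\lambda^{-1}$ exactly cancels the trivial bound $\|G_\omega\|\leq\delta^{-1}=\lambda^{\ell}$ for the exterior, so everything reduces to $|\Delta_{2\ell-1}(\omega)|^{-1}$. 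Proposition~\ref{prop:2} --- an induction on $m$ via the recursion $\Delta_m=V(m-1)\Delta_{m-1}-\lambda^{-2}\Delta_{m-2}$, splitting the single-site law into its bounded-density part on $[-\xi,\xi]$ and the rest --- yields $\int|\Delta_{2\ell-1}|^{-1}d\mu\leq(3A\ln(1+1/(A\delta))+2\xi^{-1})^{2\ell-1}\leq(C\ln\lambda)^{2\ell-1}$; this is where the local independence is actually used, in place of your resonance analysis. Finally, the exponential smallness of the bad set comes from the quantitative Thouless estimate, Proposition~\ref{prop:thj}: with $g(\delta)\leq 2\lambda^{-\ell}(C\ln\lambda)^{2\ell-1}$ and $t=\ln\lambda-C\ln\ln\lambda$, the measure of $Z_t$ is at most $2e\exp[-\lambda^\ell\ln\ln\lambda/(C\ln\lambda)^{2\ell-1}]$. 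This Chebyshev-type step on the logarithmic potential, converting a density-of-states bound at scale $\lambda^{-\ell}$ into an exponentially small energy set, is the mechanism your proposal is missing and cannot be replaced by block-frequency counting under the stated hypotheses.
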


For $\ell = 1$, this result can be derived from the arguments in \cite{Sp}.

\paragraph{Example} Let $\Omega = \mathbb{T}^d = (\mathbb{R}/(2\pi\mathbb{Z}))^d$ be the $d$-dimensional
torus equipped with the normalized Lebesgue measure. Fix $\alpha \notin \mathbb{Q}$, and consider the skew shift
\[ T(\omega_1, \cdots, \omega_d) = (\omega_1 + \alpha, \omega_2 + \omega_1, \cdots, \omega_d + \omega_{d-1})~. \]
Let $h: \mathbb{T} \to \mathbb{R}$ be a function such that
\begin{enumerate}
 \item $h \in C^2(\mathbb{T})$
 \item $h$ has one non-degenerate maximum $M$ and one non-degenerate minimum $m$ (without loss of generality
$M = h(0) = h(2\pi)$, $m = h(c)$ for some $0 < c < 2\pi$)
 \item $h$ is strictly increasing on $[0, c]$ and strictly decreasing on $[c, 2\pi]$
 \item $|h''(c)|, |h''(0)|>0$.
\end{enumerate}
Let $f(\omega) = h(\omega_d)$ and let $H = H_\omega$ be the corresponding Schr\"odinger operator. Set $\ell = \lfloor \frac{d-1}{2} \rfloor$.

\begin{cor}
For any $\xi > 0$,
\[ \mathrm{meas} \, \left\{ m+\xi \leq E \leq M - \xi \, \big| \,
  \gamma(E) \leq \ln \lambda - C \ln \ln \lambda \right\} 
\leq C_\xi \exp \left[ -  \lambda^\ell \frac{\ln \ln \lambda}{(C\ln\lambda)^{2\ell-1}}\right] \]
\end{cor}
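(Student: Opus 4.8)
The plan is to deduce the Corollary directly from Theorem~\ref{th2}; all the work lies in verifying its hypotheses for the skew-shift potential. First I would make the orbit explicit. Writing $\omega^{(n)} = T^n\omega$ and using the triangular form of the skew shift ($\omega^{(n)}_1 = \omega^{(n-1)}_1 + \alpha$ and $\omega^{(n)}_j = \omega^{(n-1)}_j + \omega^{(n-1)}_{j-1}$ for $j \ge 2$), one obtains the closed form
\[ (T^n\omega)_d \;=\; \sum_{j=0}^{d-1} \binom{n}{j}\,\omega_{d-j} \;+\; \binom{n}{d}\alpha \pmod{2\pi}, \]
which holds for every integer $n$ (Pascal's rule propagates the identity to negative times through the polynomial extension of the binomial coefficients). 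Hence $V(n) = h\big((T^n\omega)_d\big)$, so the $2\ell-1$ potential values $V(-\ell+1),\dots,V(\ell-1)$ are the images under the single function $h$ of the torus-valued linear forms $Y_n := (T^n\omega)_d$.

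The crux is to prove that the forms $Y_{-\ell+1},\dots,Y_{\ell-1}$ are independent. Because $\ell = \lfloor (d-1)/2\rfloor$, we have $2\ell-1 \le d-2 < d$, i.e.\ strictly fewer forms than coordinates. I would show that the affine map $\omega \mapsto (Y_n)_n : \mathbb{T}^d \to \mathbb{T}^{2\ell-1}$ pushes Lebesgue measure forward to Lebesgue measure; since the marginals are already uniform, joint uniformity is exactly independence. By the Fourier criterion this pushforward is uniform iff the integer row-vectors $v_n = \big(\binom{n}{j}\big)_{j=0}^{d-1}$ are linearly independent over $\mathbb{Q}$. This in turn follows from polynomial interpolation: in the basis $\{\binom{\cdot}{j}\}_{j=0}^{d-1}$ of polynomials of degree $<d$, a relation $\sum_n c_n v_n = 0$ reads $\sum_n c_n\, p(n)=0$ for all such $p$, and testing against the Lagrange polynomials at the $2\ell-1 \le d$ distinct nodes $n$ forces every $c_n=0$.

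Next I would check the density hypothesis. Each $Y_n$ is uniform on $\mathbb{T}$, so the common law of the $V(n)$ is the pushforward of the uniform measure under $h$. For $t$ strictly between $m$ and $M$ there are exactly two preimages, one on each monotone branch, and the density equals $\tfrac{1}{2\pi}\sum_{h(x)=t}|h'(x)|^{-1}$. The only possible blow-up is at $t=m,M$, where $h'$ vanishes; but non-degeneracy ($|h''(0)|,|h''(c)|>0$) makes this a mild, integrable square-root singularity, so the density is bounded on $[a,b] := [\,m+\xi/2,\;M-\xi/2\,]$, where the preimages stay a definite distance from the critical points. The hypotheses of Theorem~\ref{th2} then hold with this $[a,b]$ and $I = [m+\xi,\,M-\xi] \subset (a,b)$, and the theorem yields the stated estimate; the constant's $\xi$-dependence enters only through the density bound on $[a,b]$.

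I expect the independence step to be the main obstacle --- partly in extracting the clean binomial form of the orbit (and checking it for negative times), but chiefly in pinning down why exactly $2\ell-1$ of the forms remain jointly uniform. This is precisely where the choice $\ell = \lfloor (d-1)/2\rfloor$ is forced: one more form and the $v_n$ would be too many to stay linearly independent in $\mathbb{Q}^d$. The density estimate, by contrast, is routine once the non-degeneracy of the critical points is used.
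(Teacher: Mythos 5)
Your proof is correct and is precisely the argument the paper intends (the corollary is stated as an immediate consequence of Theorem~\ref{th2}): the closed form $(T^n\omega)_d=\sum_{j=0}^{d-1}\binom{n}{j}\omega_{d-j}+\binom{n}{d}\alpha$ valid for all integer $n$, the Fourier/Lagrange verification that the $2\ell-1$ torus-valued forms are jointly uniform --- hence the $V(n)$, $-\ell+1\le n\le \ell-1$, are i.i.d.\ with common law $h_*(\mathrm{Leb})$ --- and the density bound $\frac{1}{2\pi}\sum_{h(x)=t}|h'(x)|^{-1}\le A_\xi$ on $[a,b]=[m+\xi/2,\,M-\xi/2]$ (using nondegeneracy of the two critical points) are exactly the hypotheses Theorem~\ref{th2} needs. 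One small correction to your closing aside: since $2\ell-1\le d-2$, the choice $\ell=\lfloor(d-1)/2\rfloor$ is \emph{not} forced by independence --- the vectors $v_n$ at up to $d$ distinct nodes remain linearly independent over $\mathbb{Q}$ (evaluation of the binomial basis $\bigl\{\binom{\cdot}{j}\bigr\}_{j=0}^{d-1}$ at $\le d$ distinct integers has full rank, by the same interpolation argument you use), so one could take $2\ell-1$ as large as $d$; the paper's $\ell$ is simply a convenient choice, and this inaccuracy does not affect the validity of your proof of the stated corollary.
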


The corollary can be extended to more general functions $h$. For example, if $h$ is a Morse function,
the bound holds in any interval away from the critical values of $h$. 

We remark that a related result for the skew shift was obtained by Chan, Goldstein, and Schlag \cite{CGS}.
There, the assumptions on the function $h$ are weaker, and the lower bounds on the Lyapunov exponent are complemented
by a proof of Anderson localization; on the other hand, the exceptional set of energies of \cite{CGS} is much larger 
than ours (a power of $\lambda^{-1}$).

Stronger results have been obtained for the skew shift
in the case when the function $h$ is analytic; see the book of Bourgain \cite{Bourg}.

\section{Preliminaries}\label{s:prelim}

Let $(\Omega, \mathcal{B}, \mu)$ be a probability space, and let $T: \Omega \to \Omega$ be a
measure-preserving invertible transformation. We shall consider one-dimensional discrete
Schr\"odinger operators $H = H_\omega$, $\omega \in \Omega$, acting on 
$\ell^2(\mathbb{Z})$ by
\[ (H\psi)(n) = \lambda^{-1} (\psi(n-1) + \psi(n+1)) + V(n)\psi(n)~,\]
where $\lambda \gg 1$ is the coupling constant, and $V = V_\omega$ is constructed
from $T$ and a bounded measurable function $f: \Omega \to \mathbb{R}$ by
the formula
\[ V_\omega(n) = f(T^n \omega)~. \]

If the transformation $T$ is ergodic, the operator $H$ is called ergodic. In this case
the Lyapunov exponent $\gamma(E; \omega)$ is defined for every 
$E \in \mathbb{R}$ and $\mu$-almost every $\omega$; moreover, $\gamma(E; \omega)$
is equal to an $\omega$-independent number $\gamma(E)$ on a set of $\omega$ of
full measure. In addition, the density of states $\rho$ is defined (as an $\omega$-independent
probability measure on $\mathbb{R}$). The Lyapunov exponent is related to the density
of states by the Thouless formula
\begin{equation}\label{eq:thj}
\gamma(E) = \ln \lambda + \int \ln |E - E'| d\rho(E')~. 
\end{equation}
These facts may be found for example in the book \cite{CFKS} of Cycon, Froese, Kirsch,
and Simon.

If $T$ is not ergodic, the Lyapunov exponent and the density of states admit a following
generalization, based on the theorem on ergodic decomposition (first proved by
von Neumann \cite{vN} and Krylov--Bogolyubov \cite{KB}). Let us briefly recall the definitions.

A $T$-invariant probability measure $\eta$ on $(\Omega, \mathcal{B})$
is called $T$-ergodic if $T$ is ergodic on $(\Omega, \mathcal{B}, \eta)$. The space of
all $T$-ergodic measures is denoted $\mathrm{Erg}(T)$. The ergodic decomposition theorem
(see Walters \cite[pp.~27--28]{W}) states that there exists a probability measure $\eta$ on $\mathrm{Erg}(T)$ so that
\begin{equation}\label{eq:decomp}
\mu = \int_{\mathrm{Erg} (T)} \nu \, d\eta(\nu)~.
\end{equation}
The representation (\ref{eq:decomp}) is called the ergodic decomposition of $\mu$.

\vspace{2mm}\noindent
By Fubini's theorem, one can prove the following:
\begin{lemma} The Lyapunov exponent $\gamma(E; \omega)$is defined for every 
$E \in\mathbb{R}$ and $\mu$-almost every $\omega \in \Omega$. The density of
states $\rho_\omega$ is defined for $\mu$-almost every $\omega$. 
\end{lemma}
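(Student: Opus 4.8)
The plan is to reduce both claims to the ergodic case recalled above, applied on each ergodic component separately, and then to reassemble by the ergodic decomposition (\ref{eq:decomp}). For every $\nu \in \mathrm{Erg}(T)$ the transformation $T$ is ergodic on $(\Omega, \mathcal{B}, \nu)$, so $H_\omega$ is $\nu$-ergodic and the theory quoted from \cite{CFKS} applies: for each $E$ the exponent $\gamma(E;\omega)$ is defined for $\nu$-a.e.\ $\omega$, and the density of states $\rho_\omega$ is defined for $\nu$-a.e.\ $\omega$ (both coinciding with the deterministic objects attached to $(\Omega,\mathcal{B},\nu,T)$). The remaining task is only to promote these $\nu$-almost-sure statements, which hold for $\eta$-almost every $\nu$, to $\mu$-almost-sure statements.

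First I would fix an energy $E$. Writing $M_n(E;\omega)$ for the product of the first $n$ transfer matrices of $(H_\omega - E)\psi = 0$, the exponent $\gamma(E;\omega) = \tfrac1n \ln\|M_n(E;\omega)\|$ taken in the limit $n\to\infty$ is defined exactly off the set $B_E = \{\omega : \text{this limit fails to exist}\}$. Each $\tfrac1n\ln\|M_n(E;\cdot)\|$ is measurable, so $B_E \in \mathcal{B}$, and $\nu(B_E) = 0$ for every $\nu \in \mathrm{Erg}(T)$ by the preceding paragraph. Integrating $\mathbf{1}_{B_E}$ against (\ref{eq:decomp}) — the instance of Fubini/Tonelli alluded to in the statement — gives
\[ \mu(B_E) = \int_{\mathrm{Erg}(T)} \nu(B_E)\, d\eta(\nu) = 0~, \]
so $\gamma(E;\omega)$ is defined for $\mu$-a.e.\ $\omega$, for each $E$. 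For the density of states I would run the identical scheme on the Birkhoff averages $\tfrac{1}{2L+1}\mathrm{Tr}\big(\chi_{[-L,L]}\,g(H_\omega)\big)$: fixing a countable family $\{g_k\}$ dense in $C_0(\mathbb{R})$, each average converges as $L \to \infty$ for $\nu$-a.e.\ $\omega$, for every $\nu$ and every $k$, hence for $\mu$-a.e.\ $\omega$ after integrating against $\eta$; a countable intersection over $k$ together with density then yields vague convergence of the empirical spectral measures, i.e.\ the existence of $\rho_\omega$, on a set of full $\mu$-measure.

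The genuinely delicate points are measure-theoretic, and I expect the main obstacle to lie there rather than in any analysis. One must know that (\ref{eq:decomp}) is a true disintegration, so that $\nu \mapsto \nu(A)$ is $\eta$-measurable and $\mu(A) = \int_{\mathrm{Erg}(T)} \nu(A)\, d\eta(\nu)$ for every $A \in \mathcal{B}$; this identity is precisely the Fubini input and rests on the measurable structure of $\mathrm{Erg}(T)$ supplied by the ergodic decomposition theorem. One must also verify joint measurability in $(E,\omega)$, which is immediate here because $\gamma$ is a pointwise limit of the jointly continuous functions $\tfrac1n\ln\|M_n(E;\omega)\|$. The uncountability of $\mathbb{R}$ causes no trouble for the pointwise-in-$E$ reading above, since $B_E$ is treated one energy at a time; if one instead wants $\gamma(E;\cdot)$ to exist for all $E$ on a single full-measure set, it suffices to invoke the stronger ergodic fact that each component carries a set $\Omega_\nu$ with $\nu(\Omega_\nu)=1$ on which the exponent exists for every $E$ at once, and to observe that the event $\Omega_0 = \{\omega : \gamma(E;\omega)\text{ exists for all }E,\ \rho_\omega \text{ exists}\}$ contains every $\Omega_\nu$, so that $\mu(\Omega_0)=1$ follows from the same $\eta$-integration.
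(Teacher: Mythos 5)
Your argument is correct and coincides with the paper's own proof, which likewise fixes $E$, lets $B$ be the set where $\gamma(E;\omega)$ is undefined, notes $\nu(B)=0$ for every ergodic component by the ergodic theory of \cite{CFKS}, and integrates over (\ref{eq:decomp}) to get $\mu(B)=0$, handling $\rho_\omega$ the same way. Your extra steps---measurability of $B_E$, the disintegration identity $\mu(A)=\int \nu(A)\,d\eta(\nu)$, and the countable family $\{g_k\}$ for vague convergence of the empirical spectral measures---are exactly the standard details the paper's three-line proof leaves implicit.
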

\begin{proof}
Let $B$ be the set of $\omega$ for which $\gamma(E; \omega)$ is not defined. Then,
for any $\eta \in \mathrm{Erg}\,T$, $\eta(B) = 0$ by the ergodic case. Hence by
(\ref{eq:decomp}), $\mu(B) = 0$. The second statement is proved in a similar way.
\end{proof}

\noindent We set
\begin{equation}\label{eq:defgamma}
\gamma(E) = \int \gamma(E, \omega) d\mu(\omega)
\end{equation}
and
\begin{equation}\label{eq:defrho}
\rho = \int \rho_\omega d\mu(\omega)~.
\end{equation}
Fubini's theorem also yields the following lemmata:
\begin{lemma}\label{l:bdrho}
For every $z = E_0 + i\delta \in \mathbb{C} \setminus \mathbb{R}$,
\[ \int \frac{d\rho(E)}{E-z} = \int (H_\omega - z)^{-1}(0,0) d\mu(\omega)~.\]
In particular, for any $0 < \alpha \leq 1$,
\begin{multline*}
 \rho(E_0 - \delta, E_0 +\delta) 
\leq (2\delta)^\alpha \int \left[ \Im (H_\omega - E_0 - i\delta)^{-1}(0,0) \right]^\alpha d\mu(\omega) \\
  \leq (2\delta)^\alpha \int |(H_\omega - E_0 - i\delta)^{-1}(0,0)|^\alpha d\mu(\omega)~.
\end{multline*}
 
\end{lemma}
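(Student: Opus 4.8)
The plan is to prove Lemma~\ref{l:bdrho} in two parts. The identity

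$$\int \frac{d\rho(E)}{E-z} = \int (H_\omega - z)^{-1}(0,0)\, d\mu(\omega)$$

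is the statement that the Stieltjes/Borel transform of the averaged density of states equals the averaged diagonal Green's function. **First I would** recall the ergodic (single-measure) version of this fact: for a $T$-ergodic measure $\eta$, the density of states $\rho_\eta$ is characterized precisely by

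$$\int \frac{d\rho_\eta(E)}{E-z} = \int (H_\omega - z)^{-1}(0,0)\, d\eta(\omega),$$

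which is the standard characterization of the DOS as the $\eta$-average of the spectral measure at the site $n=0$ (this is the content of the facts cited from \cite{CFKS}). With this in hand, the non-ergodic case follows by integrating against the ergodic decomposition measure $\eta$ from (\ref{eq:decomp}) and applying Fubini's theorem to interchange the $z$-transform with the integral over $\mathrm{Erg}(T)$, exactly as the definitions (\ref{eq:defrho}) set up $\rho$ as the average of $\rho_\omega$. I would keep careful track of the fact that for fixed $z\in\mathbb{C}\setminus\mathbb{R}$ the integrand $(H_\omega - z)^{-1}(0,0)$ is bounded by $|\Im z|^{-1}=\delta^{-1}$, which makes the Fubini interchange automatic and harmless.

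**Next I would** derive the inequality. Writing $z = E_0 + i\delta$, the key elementary estimate is that for the Poisson kernel,

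$$\Im (H_\omega - E_0 - i\delta)^{-1}(0,0) = \int \frac{\delta}{(E - E_0)^2 + \delta^2}\, d\rho_\omega(E) \geq \frac{1}{2\delta}\,\rho_\omega(E_0-\delta, E_0+\delta),$$

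since on the interval $(E_0-\delta, E_0+\delta)$ the kernel $\delta/((E-E_0)^2+\delta^2)$ is bounded below by $\delta/(2\delta^2) = 1/(2\delta)$. Rearranging gives the bound for $\alpha = 1$:

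$$\rho_\omega(E_0-\delta, E_0+\delta) \leq 2\delta\, \Im (H_\omega - E_0 - i\delta)^{-1}(0,0).$$

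**Then I would** upgrade to general $0 < \alpha \leq 1$. The only subtlety is that the lower bound $1/(2\delta)$ on the Poisson kernel does not directly survive an $\alpha$-power at the level of a single $\omega$; instead I would integrate the $\alpha=1$ bound over $\omega$ first and then apply Jensen's (or, equivalently, the concavity of $t\mapsto t^\alpha$) together with the probability normalization of $\mu$. Concretely, from $\rho = \int \rho_\omega\, d\mu(\omega)$ one gets $\rho(E_0-\delta,E_0+\delta) = \int \rho_\omega(E_0-\delta,E_0+\delta)\, d\mu(\omega) \leq 2\delta \int \Im(H_\omega - E_0 - i\delta)^{-1}(0,0)\, d\mu(\omega)$, and then

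$$2\delta \int \Im(H_\omega - E_0 - i\delta)^{-1}(0,0)\, d\mu \leq (2\delta)^\alpha \int \big[\Im(H_\omega - E_0 - i\delta)^{-1}(0,0)\big]^\alpha\, d\mu$$

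follows because $\Im(H_\omega - E_0 - i\delta)^{-1}(0,0) \geq 0$ and $(2\delta \cdot t) \leq (2\delta)^\alpha t^\alpha$ whenever $2\delta\, t \leq 1$ — and one may assume $2\delta\,\Im(\cdots) \le 1$ in the relevant regime, or more cleanly bound each term using $s \le s^\alpha$ for $0 \le s \le 1$ applied to $s = 2\delta\,\Im(\cdots)$ after noting the total mass is at most $1$. The final inequality $\Im w \leq |w|$ is trivial.

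**The main obstacle** I anticipate is not analytic but bookkeeping: making the passage from the ergodic to the non-ergodic setting clean, i.e.\ justifying that $\rho_\omega$ as defined via (\ref{eq:defrho}) genuinely has $(H_\omega - z)^{-1}(0,0)$ as (the $\omega$-average of) its Borel transform, so that the Poisson-kernel argument applies $\omega$-by-$\omega$ before averaging. This is exactly where the ergodic decomposition and the preceding lemma (defining $\rho_\omega$ $\mu$-a.e.) are needed, and once that identification is in place the two displayed inequalities are routine.
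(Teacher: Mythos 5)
Your first part (the Borel-transform identity) is fine and is exactly what the paper intends: the paper gives no written proof, saying only that ``Fubini's theorem also yields the following lemmata,'' i.e.\ the ergodic-case identity from \cite{CFKS} integrated against the ergodic decomposition (\ref{eq:decomp}), with the interchange justified by the bound $|(H_\omega-z)^{-1}(0,0)|\leq \delta^{-1}$. One small but real imprecision there: your displayed identity
\[ \Im (H_\omega - E_0 - i\delta)^{-1}(0,0) = \int \frac{\delta}{(E-E_0)^2+\delta^2}\, d\rho_\omega(E) \]
is false pointwise in $\omega$; the left side is the Poisson transform of the \emph{spectral measure} $\mu_{\omega,00}=\langle \delta_0, E_{H_\omega}(\cdot)\delta_0\rangle$, and it equals the transform of $\rho_\omega$ only after averaging over the ergodic component. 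You flag this yourself at the end, but it matters for how the second part should be run.

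The genuine gap is in your passage from $\alpha=1$ to $0<\alpha<1$. You integrate over $\omega$ first and then claim
$2\delta \int \Im G_\omega \, d\mu \leq (2\delta)^\alpha \int (\Im G_\omega)^\alpha \, d\mu$,
justified by ``one may assume $2\delta\,\Im(\cdots)\leq 1$.'' That assumption is unavailable: since $\mu_{\omega,00}$ is a probability measure, $2\delta\,\Im G_\omega(z;0,0)=\int \frac{2\delta^2}{(E-E_0)^2+\delta^2}\,d\mu_{\omega,00}(E)$ ranges over $[0,2]$, and equals $2$ when $H_\omega$ has an eigenvalue at $E_0$ with eigenfunction concentrated at $0$; taking $\Im G_\omega \equiv \delta^{-1}$ shows the displayed inequality itself fails (left side $2$, right side $2^\alpha$). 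The Jensen/concavity route you mention also points the wrong way: concavity of $t\mapsto t^\alpha$ gives $\int t^\alpha\,d\mu \leq \bigl(\int t\,d\mu\bigr)^\alpha$, which is the opposite of what you need. The correct (and presumably intended) argument performs the truncation pointwise in $\omega$ at the level of the spectral measure, where the mass bound is exactly $1$: from the Poisson kernel, $\mu_{\omega,00}(E_0-\delta,E_0+\delta) \leq \min\bigl(1,\, 2\delta\,\Im G_\omega(z;0,0)\bigr) \leq \bigl(2\delta\,\Im G_\omega(z;0,0)\bigr)^\alpha$, using $\min(1,x)\leq x^\alpha$ for all $x\geq 0$ and $0<\alpha\leq 1$; then integrate $d\mu(\omega)$ and use $\rho=\int \mu_{\omega,00}\,d\mu(\omega)$ (via the ergodic decomposition and Fubini, as in your first part). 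This one-line repair recovers the lemma, but as written your order of operations --- average first, then apply the power --- proves a false intermediate statement.
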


\begin{lemma}\label{l:thj} The Thouless-Jones fornula (\ref{eq:thj}) remains valid with the 
definitions (\ref{eq:defgamma}), (\ref{eq:defrho}).
\end{lemma}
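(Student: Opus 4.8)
The plan is to reduce the statement to the ergodic Thouless--Jones formula (\ref{eq:thj}) by means of the ergodic decomposition (\ref{eq:decomp}), and then to interchange two integrations by Fubini's theorem, exactly in the spirit of the preceding lemmata. First I would record that for each ergodic measure $\nu \in \mathrm{Erg}(T)$ the ordinary Thouless--Jones formula applies to the ergodic operator associated with $(\Omega,\mathcal{B},\nu,T)$: writing $\gamma_\nu(E)$ for its ($\omega$-independent) Lyapunov exponent and $\rho_\nu$ for its density of states,
\[ \gamma_\nu(E) = \ln\lambda + \int \ln|E-E'|\,d\rho_\nu(E')~. \]
Within the component $\nu$ one has $\gamma(E;\omega)=\gamma_\nu(E)$ and $\rho_\omega=\rho_\nu$ for $\nu$-almost every $\omega$, so inserting the definitions (\ref{eq:defgamma}), (\ref{eq:defrho}) into the decomposition (\ref{eq:decomp}) (and using that $\gamma(E;\cdot)$ is bounded, so that the order of integration over $\Omega$ and $\mathrm{Erg}(T)$ may be exchanged) gives
\[ \gamma(E)=\int_{\mathrm{Erg}(T)}\gamma_\nu(E)\,d\eta(\nu)~, \qquad \rho=\int_{\mathrm{Erg}(T)}\rho_\nu\,d\eta(\nu)~. \]

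I would then integrate the per-component formula against the probability measure $\eta$. The constant term contributes $\ln\lambda$, and it remains to establish the identity
\[ \int_{\mathrm{Erg}(T)}\!\int \ln|E-E'|\,d\rho_\nu(E')\,d\eta(\nu) = \int \ln|E-E'|\,d\rho(E')~, \]
which, once the left-hand double integral is known to be absolutely convergent, is a Fubini interchange combined with $\rho=\int\rho_\nu\,d\eta$. The starting point for the convergence is that $f$ is bounded, so every $H_\omega$ has spectrum inside the fixed interval $[-M,M]$ with $M=\|f\|_\infty+2\lambda^{-1}$, and hence every $\rho_\nu$ is supported in $[-M,M]$, uniformly in $\nu$.

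To verify absolute convergence I would split $\ln|E-E'|=\ln_+|E-E'|-\ln_-|E-E'|$. On $[-M,M]$ the positive part $\ln_+|E-E'|$ is bounded by the constant $\ln_+(|E|+M)$, so its double integral is finite. For the negative part I would use the nonnegativity of the Lyapunov exponent (the transfer matrices lie in $\mathrm{SL}_2(\mathbb{R})$, whence $\gamma_\nu(E)\ge 0$): rearranging the per-component formula gives the uniform bound
\[ \int \ln_-|E-E'|\,d\rho_\nu(E') = \ln\lambda+\int\ln_+|E-E'|\,d\rho_\nu(E')-\gamma_\nu(E) \le \ln\lambda+\ln_+(|E|+M)~, \]
whose integral against $\eta$ is finite. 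With both parts controlled, Fubini applies and yields $\gamma(E)=\ln\lambda+\int\ln|E-E'|\,d\rho(E')$, i.e.\ (\ref{eq:thj}).

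The one genuine obstacle is the logarithmic singularity of $\ln|E-E'|$ at $E'=E$: a bare Tonelli argument for the negative part only shows that the double and single integrals agree as elements of $[0,+\infty]$, so the crux is the uniform-in-$\nu$ finiteness of $\int\ln_-|E-E'|\,d\rho_\nu$, which is precisely what the $\mathrm{SL}_2$ nonnegativity buys. The remaining ingredients --- measurability of $\nu\mapsto\gamma_\nu(E)$ and $\nu\mapsto\rho_\nu$, and the identification of $\gamma(E;\omega)$ and $\rho_\omega$ with their ergodic-component values --- are part of the ergodic-decomposition machinery already used to prove the first lemma of this section, and I would treat them as routine.
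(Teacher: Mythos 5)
Your proposal is correct and takes essentially the route the paper intends: the paper gives Lemma~\ref{l:thj} no separate proof, attributing it (together with Lemma~\ref{l:bdrho}) to Fubini's theorem applied across the ergodic decomposition (\ref{eq:decomp}), which is exactly your reduction to the per-component Thouless formula $\gamma_\nu(E)=\ln\lambda+\int\ln|E-E'|\,d\rho_\nu(E')$ followed by an interchange of the $\eta$- and $\rho_\nu$-integrations. Your justification of that interchange --- splitting off $\ln_-|E-E'|$ and controlling it uniformly in $\nu$ via the $\mathrm{SL}_2(\mathbb{R})$ nonnegativity $\gamma_\nu(E)\ge 0$ together with the uniformly bounded spectra --- is sound and supplies precisely the detail the paper leaves implicit.
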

Lemma~\ref{l:thj} yields the following upper bound on the Lyapunov exponent (which can be also easily obtained by other means):
\[\begin{split}
\gamma(E) &\leq \ln \lambda + \max_{E' \in \sigma(H)} \ln|E - E'| \\
  &\leq \ln \lambda + \ln \mathrm{diam}\,\sigma(H)\\
  &\leq \ln \lambda + \ln(\max f - \min f + 4 \lambda^{-1}) \leq \ln \lambda + C~. 
\end{split}\]

\section{Application of the Thouless--Jones formula}

Fix an energy $E_0 \in \mathbb{R}$, $t > 0$, and $1 \geq \xi > \delta > 0$. Denote
\[ g(\delta) = g(\delta; E_0) = \max\left(\delta, \, \sup_{|E - E_0| \leq \xi} \rho[E-\delta, E+\delta]\right) \]
and
\[ Z_t = Z_t(E_0) = \left\{ E \, \Big{|} \, |E - E_0| \leq \delta \quad \text{and} \quad \gamma(E) \leq t \right\}~. \]
\begin{prop}\label{prop:thj}
In the notation above, for any $\delta > 0$ we have:
\[ \mathrm{meas}\, Z_t(E_0) \leq 2e \exp \left\{ - \frac{\ln \lambda - t - 6 \xi \ln \frac{e^2 g(\delta; E_0)}{\xi \delta}}
						              {2g(\delta; E_0)}\right\}~.\]
\end{prop}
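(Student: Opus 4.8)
The plan is to combine the Thouless--Jones formula with an exponential Chebyshev (Markov) bound, tuned to the scale set by the density of states. Write $u(E)=\int\ln|E-E'|\,d\rho(E')$, so that by Lemma~\ref{l:thj} (that is, by (\ref{eq:thj})) the condition $\gamma(E)\le t$ is exactly $-u(E)\ge s$, where $s:=\ln\lambda-t$. Thus $Z_t=\{E:\ |E-E_0|\le\delta,\ -u(E)\ge s\}$, and for any $p>0$ Chebyshev gives
\[ \mathrm{meas}\,Z_t \;\le\; e^{-ps}\int_{|E-E_0|\le\delta} e^{-p\,u(E)}\,dE \;=\; e^{-ps}\int_{|E-E_0|\le\delta}\exp\!\Big(p\!\int\ln\tfrac1{|E-E'|}\,d\rho(E')\Big)dE. \]
The whole argument rests on choosing $p=\tfrac1{2g}$ and then showing the remaining integral is at most $2e\,\exp\big(p\cdot 6\xi\ln\tfrac{e^2g}{\xi\delta}\big)$.

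First I would split the inner $\rho$-integral at the scale $\delta$ into a \emph{local} part $|E-E'|\le\delta$ and a \emph{non-local} part $|E-E'|>\delta$. The non-local part is bounded uniformly for $|E-E_0|\le\delta$: writing $m_E(r)=\rho(E-r,E+r)$ and integrating by parts, $\int_{\delta<|E-E'|<1}\ln\frac1{|E-E'|}\,d\rho=\int_{\delta}^{1}\tfrac{m_E(r)}{r}\,dr$ plus boundary terms of the same order (the part $|E-E'|\ge 1$ is $\le 0$ and may be discarded). The defining bound $m_E(r)\le g$ for $r\le\delta$, its dyadic consequence $m_E(r)\le C(r/\delta)\,g$ for $\delta\le r\le\xi$, and $m_E\le 1$ beyond $\xi$, control this by a quantity of order $\xi\ln\frac{g}{\xi\delta}$; this is the source of the correction term $6\xi\ln\frac{e^2g}{\xi\delta}$. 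Pulling the factor $\exp(pb)$ out of the integral, with $b=6\xi\ln\frac{e^2g}{\xi\delta}$, it remains to bound $\int_{|E-E_0|\le\delta}\exp\big(p\int_{|E-E'|\le\delta}\ln\frac1{|E-E'|}\,d\rho\big)\,dE$ by the constant $2e$.

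The heart of the matter is this local integral, and here the choice $p=\tfrac1{2g}$ is essential. A naive application of Jensen's inequality to the \emph{full} measure $\rho$ would dominate the integrand by $\int|E-E'|^{-p}\,d\rho(E')$, hence the outer integral by $\int_{|E-E_0|\le\delta}|E-E'|^{-1/(2g)}\,dE$; but in the regular (small-$g$) regime of interest one has $1/(2g)>1$, so this singularity is \emph{not} integrable and the estimate is lost. The point of localizing to $|E-E'|\le\delta$ is that the defining bound caps the local mass by $m_E\le g$, so applying Jensen to the normalized \emph{local} measure produces only the exponent $p\,m_E\le pg=\tfrac12<1$. The worst singularity is therefore $|E-E'|^{-1/2}$, which is integrable over an interval, and $\int_{|E-E_0|\le\delta}(\cdots)\,dE$ is bounded by an absolute constant, yielding the prefactor $2e$. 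Combining the three factors gives $\mathrm{meas}\,Z_t\le 2e\,\exp\{-(s-b)/(2g)\}$, which is the assertion.

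I expect the main obstacle to be exactly this local exponential-moment bound for a \emph{general} density of states. The clean single-cluster picture---where a mass $\le g$ sits essentially at one point and the singularity is literally $|E-E'|^{-1/2}$---must be upgraded to an arbitrary, possibly singular, $\rho$ for which we know only the scale-$\delta$ regularity $\rho(E-\delta,E+\delta)\le g$. Turning the many-small-masses situation into a clean constant, rather than letting overlapping near-singularities accumulate, while simultaneously keeping the non-local bookkeeping sharp enough to land the correction in the stated $\xi\ln\frac{e^2g}{\xi\delta}$ form, is where the real work lies.
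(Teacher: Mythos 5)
Your reduction via exponential Chebyshev is set up correctly, and your non-local bookkeeping is the same shell-type computation the paper performs; but the proposal stops exactly where a proof would have to begin. The step you yourself call ``the heart of the matter''---the uniform bound
\[
\int_{|E-E_0|\le\delta}\exp\Bigl(p\int_{|E'-E|\le\delta}\ln\tfrac{1}{|E-E'|}\,d\rho(E')\Bigr)\,dE\;\le\;C,
\qquad p=\tfrac{1}{2g},
\]
valid for an \emph{arbitrary} measure $\rho$ about which one knows only $\rho[E-\delta,E+\delta]\le g$---is asserted, not proved, and the Jensen step you sketch does not deliver it. Jensen applied to the normalized window measure $\nu_E/m_E$ (where $\nu_E=\rho|_{(E-\delta,E+\delta)}$, $m_E=\nu_E(\mathbb{R})$) gives the majorant $\frac{1}{m_E}\int_{|E'-E|\le\delta}|E-E'|^{-p\,m_E}\,d\rho(E')$, which is \emph{not} simply ``a singularity $|E-E'|^{-1/2}$'': it carries the $E$-dependent normalization $1/m_E$ and a window that moves with $E$, and both obstruct the Fubini step you would need to integrate in $E$. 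The clean picture is literally correct only when the window mass sits in a single cluster; for general $\rho$ (Cantor-type local measures, masses entering and leaving the window as $E$ varies) you offer no argument, and your final paragraph concedes that this case is ``where the real work lies.'' That concession is the gap: the inequality may well be true, but establishing it uniformly over all admissible $\rho$ \emph{is} the entire content of the proposition in your approach. There is also a secondary mismatch: your non-local estimate comes out as $\approx\ln\frac{g}{\delta}$, which does not reproduce the stated correction $6\xi\ln\frac{e^2g}{\xi\delta}$ when $\xi$ is small, so even the bookkeeping would have to be redone to land on the claimed constants.

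It is worth seeing how the paper avoids exponential moments altogether, because its device is much lighter than what you set up. Instead of bounding $\int e^{-pu(E)}\,dE$, integrate the Thouless--Jones identity over $Z_t$ itself: from $t\,\mathrm{meas}\,Z_t\ge\int_{Z_t}\gamma(E)\,dE$ one gets $(\ln\lambda-t)\,\mathrm{meas}\,Z_t\le\int_{Z_t}dE\int_{|E-E'|\le1}\ln|E-E'|^{-1}\,d\rho(E')$. After Fubini, the inner $E$-integral is controlled for every fixed $E'$: by the rearrangement inequality $\int_{Z_t}\ln|E-E'|^{-1}\,dE\le m\ln\frac{2e}{m}$ with $m=\mathrm{meas}\,Z_t$ (worst case: $Z_t$ an interval centered at $E'$) for $E'$ within $2\delta$ of $E_0$, and by $m\ln\frac{1}{(2k-1)\delta}$ on the shell $2k\delta\le|E'-E_0|\le(2k+2)\delta$, each shell carrying mass at most $2g$; the shells are cut off at $k_0=\lfloor\xi/(2g)\rfloor+1$, beyond which the total mass bound $1$ is used. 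Every term is linear in $m$, so dividing by $m$ yields the self-referential inequality $\ln\lambda-t\le 2g\ln\frac{2e}{m}+6\xi\ln\frac{e^2g}{\xi\delta}$, and the exponential smallness of $m$ comes from \emph{solving} this inequality for $m$, not from any exponential moment. Only first moments of $\ln|E-E'|^{-1}$ and interval-mass bounds on $\rho$ enter, so the estimate is automatically uniform over all admissible densities of states---precisely the uniformity your route leaves unresolved.
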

\begin{rmk}
The main goal of this paper is to make  the right-hand side as small
as possible by getting good estimates on $\rho(E-\delta, E+\delta)$
for small $\delta$.
\end{rmk}

\begin{proof}[Proof of Proposition~\ref{prop:thj}] From the Thouless--Jones formula (see (\ref{eq:thj}) and Lemma~\ref{l:thj}),
\[ \gamma(E) = \ln \lambda + \int \ln |E - E'| d\rho(E')~. \]
Therefore by definition of $Z_t = Z_t(E_0)$
\[\begin{split}
t \, \mathrm{meas}\, Z_t
	&\geq \int_{Z_t} \gamma(E) \, dE \\
	&= \int_{Z_t} \left\{ \ln \lambda + \int \ln|E-E'| d\rho(E')\right\} dE \\ 
	&\geq \mathrm{meas}\, Z_t\, \ln \lambda 
		- \int_{Z_t} dE \int_{|E' - E| \leq 1} \ln|E-E'|^{-1} d\rho(E')~.
\end{split}\]
Hence
\begin{equation}\label{eq:beforecases}\begin{split}
(\ln \lambda - t) \,\mathrm{meas}\,Z_t
	&\leq \int_{Z_t} dE \int_{|E'-E|\leq 1} \ln |E-E'|^{-1} d\rho(E')\\
	&= \int_{E_0-1-\delta}^{E_0+1+\delta} d\rho(E') 
		\int_{Z_t \cap \{|E-E'|\leq1\}} \ln|E-E'|^{-1} dE~. 
\end{split}\end{equation}
Denote
\[ J(E') = \int_{Z_t \cap \{|E-E'|\leq1\}} \ln|E-E'|^{-1} dE \]
and decompose
\begin{multline*} \int_{E_0-1-\delta}^{E_0+1+\delta} J(E') d\rho(E')\\
= \left\{ \int_{|E'-E_0| \leq 2\delta} +
\sum_{k=1}^{k_0-1} \int_{2k\delta \leq |E' - E_0| \leq (2k+2)\delta} + \int_{|E'-E_0|\geq 2k_0 \delta}\right\} J(E') d\rho(E')~, 
\end{multline*}
where we shall later  choose $k_0 = \lfloor \frac{\xi}{2g(\delta)}\rfloor+1$.

Consider the following cases:\\
\vspace{1mm}\noindent{\bf i)}
$|E' - E_0| \leq 2\delta$; then by the rearrangement inequality
\[\begin{split}
J(E') &\leq \int_{Z_t} \ln |E-E'|^{-1} dE \\
	&\leq \int_{E' - \mathrm{meas} Z_t / 2}^{E' + \mathrm{meas} Z_t/2} \ln |E-E'|^{-1} dE \\
	&= \mathrm{meas} Z_t \ln \frac{2e}{\mathrm{meas} Z_t}~.
\end{split}\]
Therefore
\[ \int_{|E'-E_0| \leq 2\delta} J(E') d\rho(E') \leq 
	2 g(\delta) \mathrm{meas} Z_t \ln \frac{2e}{\mathrm{meas} Z_t}~. \]

\vspace{1mm}\noindent
{\bf ii)} $2k\delta \leq |E' - E_0| < (2k+2) \delta$ for some $1 \leq k \leq \xi \delta^{-1} -2$.
Then for $E \in Z_t$ we have $|E - E_0| \leq \delta$ and hence $|E-E'| \geq (2k-1)\delta$.
Therefore
\[ \int_{2k\delta \leq |E'-E_0| \leq (2k+2)\delta} J(E') d\rho(E')
	\leq 2g(\delta) \mathrm{meas} \, Z_t \, \ln \frac{1}{(2k-1)\delta}~.\]

\vspace{1mm}\noindent
{\bf iii)} By the same reasoning, 
\[ \int_{2k_0 \delta \leq |E' - E_0|} J(E') d\rho(E') 
	\leq \mathrm{meas}\,Z_t \, \ln \frac{1}{(2k_0-1)\delta}~.  \]

\vspace{1mm}\noindent Combining these estimates, we obtain:
\[\begin{split}
&\int_{E_0-1-\delta}^{E_0+1+\delta} J(E') d\rho(E')  \\
&\quad\leq \mathrm{meas} \, Z_t \, \left\{
	2 g(\delta) \ln \frac{2e}{\mathrm{meas} \, Z_t}
	+ 2 g(\delta) k_0 \ln \frac{e^2}{(2k_0-1)\delta}
	+ \ln \frac{1}{(2k_0-1)\delta} \right\}~.
\end{split}\]
Taking $k_0 = \lfloor \frac{\xi}{2g(\delta)} \rfloor + 1$ and plugging into
(\ref{eq:beforecases}) yields:
\[ \ln \lambda - t \leq 2 g(\delta) \ln \frac{2e}{\mathrm{meas} Z_t} 
	+ 6\xi \ln \frac{e^2 g(\delta)}{\xi \delta}~,\]
whence 
\[ \mathrm{meas} Z_t \leq 2e 
	\exp \left\{ - \frac{\ln \lambda - t - 6\xi \ln \frac{e^2 g(\delta)}{\xi \delta}}
                               {2g(\delta)}\right\}~.  \]
\end{proof}

\section{Theorem~\ref{th2}: proof}

Let $H_{m} = H_{m,\omega}$ be the restriction of $H = H_\omega$ to $\{0,\cdots,m-1\}$. Denote
\[ \Delta_{m}(\omega) = \det (H_{m,\omega} - E - i\delta)~. \] 
For $m = 0$, set $\Delta_0(\omega) = 1$. The proof of Theorem~\ref{th2} uses the following auxiliary proposition,
which we prove after we prove the theorem.
\begin{prop}\label{prop:2} Let $m \geq 1$ be a fixed integer.
Suppose $V(0),\cdots,V(m-1)$ are independent, and that their common distribution has bounded
density $\leq A$ in $[E-\xi,E+\xi]$. Then for any $a$ with $\Im a \geq 0$ and $|a| \leq \xi/2$,
\[ \int d\omega |\Delta_m(\omega) - a \Delta_{m-1}(\omega)|^{-1} \leq (3 A \ln (1 + 1/(A\delta)) + 2\xi^{-1})^m~. \]
\end{prop}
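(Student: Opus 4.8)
The plan is to reduce the $m$-dimensional integral to a one-dimensional one by peeling off the potential values one at a time and running an induction on $m$. The starting point is the three-term recurrence for the truncated determinants,
\[ \Delta_m = (V(m-1) - E - i\delta)\,\Delta_{m-1} - \lambda^{-2}\Delta_{m-2}~, \]
which exhibits $F_m := \Delta_m - a\Delta_{m-1}$ as an affine function of the last potential value $w = V(m-1)$:
\[ F_m = \Delta_{m-1}\,(w - \zeta)~, \qquad \zeta = E + i\delta + a + \lambda^{-2}\,\Delta_{m-2}/\Delta_{m-1}~. \]
Conditioning on $V(0),\dots,V(m-2)$ (so that $\Delta_{m-1},\Delta_{m-2}$, and hence $\zeta$, are fixed) and integrating first over $w$, I would bound $\int p(w)\,|F_m|^{-1}\,dw \le K/|\Delta_{m-1}|$ with $K = 3A\ln(1+1/(A\delta)) + 2\xi^{-1}$, and then observe that $|\Delta_{m-1}| = |\Delta_{m-1} - 0\cdot\Delta_{m-2}|$, so the remaining integral $\int |\Delta_{m-1}|^{-1}$ is exactly the case $a=0$ of the proposition in dimension $m-1$; the induction hypothesis then yields $K\cdot K^{m-1} = K^m$. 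The base case $m=1$ is immediate, since there $\zeta = E + i\delta + a$ with $\Re\zeta \in [E-\xi/2, E+\xi/2]$ and $\Im\zeta \ge \delta$.

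The structural fact that makes the one-variable integral tractable is a positivity (Herglotz) property. By Cramer's rule $\Delta_{m-2}/\Delta_{m-1} = (H_{m-1} - E - i\delta)^{-1}(m-2,m-2)$ is a diagonal matrix element of the resolvent of a self-adjoint matrix evaluated in the upper half-plane, so its imaginary part is $\ge 0$; together with $\Im a \ge 0$ this gives $\Im\zeta \ge \delta$. Thus the singularity of $w \mapsto 1/|w-\zeta|$ never approaches the real axis closer than $\delta$.

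The heart of the matter is the one-variable estimate $\int p(w)\,|w-\zeta|^{-1}\,dw \le K$. I would split $p = p_{\mathrm{in}} + p_{\mathrm{out}}$ according to whether $w \in [E-\xi,E+\xi]$. For $p_{\mathrm{in}}$, using $|w-\zeta| \ge \sqrt{(w-\Re\zeta)^2 + \delta^2}$, the bound $p_{\mathrm{in}} \le A$, and the bathtub (rearrangement) principle—mass $\le 1$ at density $\le A$ is worst when concentrated in an interval of length $1/A$ about $\Re\zeta$—gives $\int p_{\mathrm{in}}\,|w-\zeta|^{-1}\,dw \le 2A\,\mathrm{arcsinh}(1/(2A\delta)) \le 3A\ln(1+1/(A\delta))$, uniformly in $\Re\zeta$. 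For $p_{\mathrm{out}}$, provided the singularity $\Re\zeta$ lies inside (say) $[E-\xi/2, E+\xi/2]$, every $w$ in the support of $p_{\mathrm{out}}$ satisfies $|w-\zeta| \ge |w-\Re\zeta| \ge \xi/2$, whence $\int p_{\mathrm{out}}\,|w-\zeta|^{-1}\,dw \le 2\xi^{-1}$.

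The main obstacle is precisely the control of $\Re\zeta$ needed for the $p_{\mathrm{out}}$ term. Since $\Re\zeta = E + \Re a + \lambda^{-2}\Re(\Delta_{m-2}/\Delta_{m-1})$ and $|\Re a| \le \xi/2$, one needs the resolvent term $\lambda^{-2}\Delta_{m-2}/\Delta_{m-1}$ to be small. This holds comfortably off resonance (the off-diagonal coupling being $\lambda^{-1}$, the Green's function is $O(1)$ unless $H_{m-1}$ has an eigenvalue within $O(\lambda^{-1})$ of $E$), but it fails at resonances, where $|\Delta_{m-1}|$ is itself small and $\Re\zeta$ may escape the interval. I expect the real work to be in showing that the resonant configurations of $V(0),\dots,V(m-2)$ form a set whose small measure, controlled by the density bound $A$, compensates the large single-variable integral there, so that their total contribution is absorbed into $K^m$. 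This resonance analysis—absent from the one-parameter argument of \cite{Sp}—is the essential new ingredient.
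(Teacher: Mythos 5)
Your base case and your treatment of $p_{\mathrm{in}}$ coincide with the paper's argument (the Herglotz bound $\Im\zeta\geq\delta$ plus the rearrangement estimate of Lemma~\ref{l:bdddens}, which is uniform in $\Re\zeta$, so no control of the real part is needed there). The genuine gap is the $p_{\mathrm{out}}$ term, and the missing ingredient is not a resonance analysis but an algebraic step — indeed, it is the entire reason the proposition is stated for general $a$ with $\Im a\geq 0$, $|a|\leq\xi/2$. For the out-part you must not divide by $\Delta_{m-1}$ at all; that division is what manufactures the singularity $\Re\zeta = E+\Re a+\lambda^{-2}\Re(\Delta_{m-2}/\Delta_{m-1})$ that you cannot control. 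Instead, factor out the last potential value $w=V(m-1)$:
\[ \Delta_m - a\Delta_{m-1} = (w-E-i\delta-a)\left[\Delta_{m-1} - a_2\,\Delta_{m-2}\right]~,\qquad a_2=\frac{\lambda^{-2}}{w-E-i\delta-a}~. \]
On the support of $p_{\mathrm{out}}$ one has $|w-E-i\delta-a|\geq|w-E|-|\Re a|\geq\xi/2$ \emph{deterministically}, with no hypothesis whatsoever on $\Delta_{m-2}/\Delta_{m-1}$; moreover $\Im(w-E-i\delta-a)=-\delta-\Im a<0$ forces $\Im a_2\geq 0$, and $|a_2|\leq 2\lambda^{-2}/\xi\leq\xi/2$ for large $\lambda$. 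The remaining $(m-1)$-fold integral is then exactly $F_{m-1}(a_2)$, i.e.\ the induction hypothesis at the \emph{nonzero} parameter $a_2$ — this is precisely the sense of the paper's remark that the stronger statement ``is more suited for an inductive proof.'' Your induction only ever invokes the case $a=0$, which is a symptom of the same omission: with your decomposition the induction cannot close on the out-part.

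Beyond being unexecuted, the deferred ``resonance analysis'' is also a misdiagnosis. When $|\Delta_{m-1}|$ is small the original integrand $|\Delta_m-a\Delta_{m-1}|^{-1}$ need not be large; the apparent singularity is an artifact of forcing the factorization $F_m=\Delta_{m-1}(w-\zeta)$, and the troublesome regime is the intermediate one where $\lambda^{-2}\Delta_{m-2}/\Delta_{m-1}$ is of order $\xi$ — pushing $\Re\zeta$ out of the window while $|\Delta_{m-1}|$ is only moderately small. Any attempt to show that such configurations contribute little would require integrating $|\Delta_{m-1}|^{-1}$ over the resonant set with a quantitative gain, i.e.\ essentially the estimate you are trying to prove, so the proposal as written is circular where it is not merely incomplete. (Two side remarks: the ``analysis of resonances'' mentioned in the introduction refers to the standard-map section, Lemmas~\ref{l:7} and~\ref{l:bound.K}, not to Proposition~\ref{prop:2}, whose proof is soft; and the paper's formula $a_2=\lambda^{-2}(V(m-1)-a)$ is a typo for the reciprocal displayed above, as its subsequent bound $|a_2|\leq\lambda^{-2}/(\xi/2)$ confirms.)
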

\begin{rmk}
Only the case $a=0$ is needed to prove the theorem; however, the stronger statement is more suited
for an inductive proof and may have other applications.
\end{rmk}

\begin{proof}[Proof of Theorem~\ref{th2}]
Let $G_\omega(z) = (H_\omega - z)^{-1}$, and $G_{\omega,{2\ell-1}} = (H_{\omega,2\ell-1} - z)^{-1}$, where
$H_{\omega,2\ell-1}$ is the restriction of $H_\omega$ to $\{0,1,\cdots,2\ell-2\}$. According to the resolvent
identity,
\begin{multline}\label{eq:resolv}G_\omega(E+i\delta; \ell-1,\ell-1) = G_{\omega,2\ell-1}(E+i\delta; \ell-1,\ell-1)  \\
 + \frac{1}{\lambda} \Big[ G_{\omega,2\ell-1}(E+i\delta; \ell-1,0) G_\omega(E+i\delta; -1, \ell-1) \\
 +  G_{\omega,2\ell-1}(E+i\delta; \ell-1,2\ell-2) G_\omega(E+i\delta; 2\ell-1, \ell-1) \Big]~. \end{multline}
Let $M = \|V-E-i\delta\|_\infty$. By Cramer's rule we can express the matrix elements of $G_{\omega,2\ell-1}$ as ratios of two determinants. Using the  inequality
\begin{equation}
|\Delta_{m}(\omega)| \leq \left(M+\frac{1}{\lambda}\right)^{m}~, \quad m = 0,1,2,\cdots,
\end{equation}
 (cf.\ (\ref{eq:rec}) below), and recalling that our matrix is of size $(2\ell-1)\times(2\ell-1)$ with indices numbered from $0$ to $2\ell - 2$, we obtain:
\[  |G_{\omega,2\ell-1}(E+i\delta; \ell-1,\ell-1) | \leq \left(M+\frac{1}{\lambda}\right)^{2(\ell - 1)} |\Delta_{2\ell - 1}(\omega)|^{-1}~,\]
whereas
\[ |G_{\omega,2\ell-1}(E+i\delta; \ell-1,0)| \leq \lambda^{-(\ell-1)}  \left(M+\frac{1}{\lambda}\right)^{\ell - 1} |\Delta_{2\ell - 1}(\omega)|^{-1}\]
and
\[ |G_{\omega,2\ell-1}(E+i\delta; \ell-1,2\ell-2)| \leq \lambda^{-(\ell-1)}  \left(M+\frac{1}{\lambda}\right)^{\ell - 1} |\Delta_{2\ell - 1}(\omega)|^{-1}~. \]
Combining these inequalities with the bound $\| G_\omega(E+i\delta)\| \leq \delta^{-1}$,
we obtain from (\ref{eq:resolv}) with $\delta = \lambda^{-\ell}$ and $\lambda \geq 2$:
\[\begin{split} |G_\omega(E+i\delta; \ell-1,\ell-1)|
&\leq \left[ (M+\frac{1}{\lambda})^{2(\ell-1)} + \lambda^{-\ell} \frac{2}{\delta} (M + \frac{1}{\lambda})^{\ell-1}\right]
|\Delta_{2\ell - 1}(\omega)|^{-1} \\
 &\leq (M+3)^{2(\ell-1)} |\Delta_{2\ell - 1}(\omega)|^{-1}~.\end{split}\]
By Proposition~\ref{prop:2}
\[\begin{split}
 &\int d\mu(\omega) |G_\omega(E+i\delta; \ell-1,\ell-1)|\\
  &\qquad\leq (M+3)^{2(\ell-1)} (3 A \ln (1 + 1/(A\delta)) + 2\xi^{-1})^{2\ell-1} \\
  &\qquad \leq (C \ln \lambda)^{2\ell - 1}~,
  \end{split}\]
where $A$ is a bound on the density of $V$ in $[a,b]$, and $\xi$ is the distance from $I$ to $\{a,b\}$. 
By Proposition~\ref{prop:thj}, one can choose $\widetilde{C}$ so that
\[ \mathrm{meas} \, Z_{\ln \lambda - C \ln \ln \lambda}(E) \leq 2 e^{- \lambda^\ell \frac{\ln \ln \lambda}{(C \ln \lambda)^{2\ell-1}}}~. \]
\end{proof}

The proof of Proposition~\ref{prop:2} is based on two  estimates (the first one follows from a rearrangement 
inequality, the second one follows from the first one).

\begin{lemma}\label{l:bdddens}
Let $\tau$ be a sub-probability measure (i.e.\ $\tau(\mathbb{R}) \leq 1$) with density bounded by $A$. Then
\[ \int \frac{d\tau(v)}{\sqrt{|v-E|^2+\delta^2}} \leq 3 A \ln(1 + 1/(A\delta))\]
for any $E \in \mathbb{R}$ and $\delta > 0$.
\end{lemma}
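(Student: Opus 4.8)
The plan is to bound the integral by splitting the real line into dyadic annuli centered at $E$ and estimating the contribution of each. First I would observe that the kernel $1/\sqrt{|v-E|^2+\delta^2}$ is a decreasing function of $|v-E|$, so by the rearrangement inequality the integral is largest when the mass of $\tau$ is concentrated as close to $E$ as possible. Since $\tau$ has density bounded by $A$, the worst case replaces $d\tau$ by $A\,dv$ on a symmetric interval about $E$ whose length is at most $1/A$ (because $\tau(\mathbb{R})\le 1$). Thus
\[
\int \frac{d\tau(v)}{\sqrt{|v-E|^2+\delta^2}}
  \leq A \int_{|v-E|\leq 1/(2A)} \frac{dv}{\sqrt{|v-E|^2+\delta^2}}
  = 2A \int_0^{1/(2A)} \frac{du}{\sqrt{u^2+\delta^2}}~.
\]

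Next I would evaluate the elementary integral exactly using $\int_0^R (u^2+\delta^2)^{-1/2}\,du = \operatorname{arcsinh}(R/\delta) = \ln\!\big(R/\delta + \sqrt{(R/\delta)^2+1}\big)$. With $R = 1/(2A)$ this gives a bound of the form $2A\,\ln\!\big(1/(2A\delta) + \sqrt{1/(2A\delta)^2+1}\big)$. The remaining task is purely to simplify this expression into the stated clean form $3A\ln(1+1/(A\delta))$. Using $\sqrt{x^2+1}\le x+1$ for $x\ge 0$ one sees the argument of the logarithm is at most $1/(A\delta)+1$, so the logarithm is at most $\ln(1+1/(A\delta))$; the constant then only needs to absorb the factor $2$ versus $3$, which is comfortable since the remaining slack accommodates the regime where $A\delta$ is not small.

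The only mild subtlety, and the step I would treat most carefully, is the rearrangement reduction in the first display: one must argue that among all sub-probability measures with density $\le A$, the integral against a radially decreasing kernel is maximized by spreading mass at the maximal allowed density $A$ symmetrically about $E$ on the smallest possible set. This is a standard rearrangement/bathtub argument, but because $\tau$ is only a sub-probability measure (total mass possibly less than $1$) one should confirm the extremal configuration still has support length $\le 1/A$ rather than exactly $1/A$; this only helps the inequality, so no difficulty arises. I expect no genuine obstacle here — the lemma is a soft estimate and every step is either a rearrangement principle or an explicit one-variable integral.
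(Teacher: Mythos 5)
Your proposal is correct and takes essentially the same route as the paper's proof: a rearrangement (bathtub) reduction of $\tau$ to the extremal density $A$ on a set of total length $1/A$ near $E$, followed by an explicit one-variable integral. The paper differs only in execution --- it splits $\tau$ into its parts on either side of $E$, rearranges each half-line separately, and bounds the kernel by $\sqrt{2}/(|v-E|+\delta)$ to avoid the $\mathrm{arcsinh}$, landing at constant $2\sqrt{2}$ where your symmetric version gives $2$; both are comfortably below $3$.
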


\begin{lemma}\label{l:2}
Let $\sigma$ be a probability measure on $\mathbb{R}$ so that the restriction of $\sigma$ to
$[-\xi, \xi]$ has density $\leq A$. Then for any real $a$ with $|a| \leq |\xi|/2$, 
\[ \int \frac{d\sigma(v)}{\sqrt{(v-a)^2 + \delta^2}} \leq 3 A \ln(1+1/(A\delta)) + 2\xi^{-1}~.\]
\end{lemma}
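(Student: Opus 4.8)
The plan is to decompose $\sigma$ according to whether its mass lies in the central interval $[-\xi,\xi]$, where the density bound is available, or in its complement, where the integrand can be controlled by a crude pointwise estimate.

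First I would split $\sigma = \sigma_{\mathrm{in}} + \sigma_{\mathrm{out}}$, where $\sigma_{\mathrm{in}}$ is the restriction of $\sigma$ to $[-\xi,\xi]$ and $\sigma_{\mathrm{out}}$ its restriction to $\mathbb{R}\setminus[-\xi,\xi]$. The measure $\sigma_{\mathrm{in}}$ is a sub-probability measure (its total mass is at most $1$) whose density is bounded by $A$, so it is precisely the kind of object to which Lemma~\ref{l:bdddens} applies. Taking $E = a$ in that lemma gives
\[ \int \frac{d\sigma_{\mathrm{in}}(v)}{\sqrt{(v-a)^2+\delta^2}} \le 3A\ln\left(1+\frac{1}{A\delta}\right)~. \]

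Next I would treat the tail. For $v$ outside $[-\xi,\xi]$ one has $|v| > \xi$, and since $|a| \le \xi/2$ the triangle inequality yields $|v-a| \ge |v| - |a| > \xi - \xi/2 = \xi/2$. Hence $\sqrt{(v-a)^2+\delta^2} \ge |v-a| > \xi/2$, and because $\sigma_{\mathrm{out}}(\mathbb{R}) \le \sigma(\mathbb{R}) = 1$,
\[ \int \frac{d\sigma_{\mathrm{out}}(v)}{\sqrt{(v-a)^2+\delta^2}} \le \frac{2}{\xi}~. \]
Adding the two bounds produces the desired inequality.

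There is no substantial obstacle here: the statement is essentially a packaging of Lemma~\ref{l:bdddens}. The only two points demanding a little care are, first, recognizing that the central piece $\sigma_{\mathrm{in}}$ is a genuine sub-probability measure so that Lemma~\ref{l:bdddens} applies verbatim (now with its center placed at $a$ rather than at $0$, which is legitimate since that lemma holds for every $E \in \mathbb{R}$); and second, using the hypothesis $|a| \le \xi/2$ to keep $|v-a|$ bounded below by $\xi/2$ on the tail — the factor $2$ arising from this lower bound is exactly what yields the additive term $2\xi^{-1}$.
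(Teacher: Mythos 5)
Your proposal is correct and coincides with the paper's own proof: the same decomposition $\sigma = \sigma|_{[-\xi,\xi]} + \sigma|_{\mathbb{R}\setminus[-\xi,\xi]}$, the same application of Lemma~\ref{l:bdddens} centered at $a$ for the inner piece, and the same lower bound $|v-a| \geq \xi/2$ (from $|a| \leq \xi/2$) giving the $2\xi^{-1}$ term for the tail. Nothing further is needed.
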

\begin{proof}[Proof of Proposition~\ref{prop:2}]
Without loss of generality we may set $E = 0$. Denote
\[ F_m (a) = \int \frac{d\sigma(V(0))d\sigma(V(1)) \cdots d\sigma(V(m-1))}{|\Delta_m - a \Delta_{m-1}|}~, \quad M_m = \max F_m~. \]
The proof is by induction on $m$. For $m=1$ the statement follows directly from Lemma~\ref{l:2}. For the induction
step, note that
\begin{equation}\label{eq:rec}\Delta_m = V(m-1) \Delta_{m-1} - \lambda^{-2} \Delta_{m-2}~.\end{equation}
Represent $\sigma = \sigma_1 + \sigma_2$, where $\sigma_1 = \sigma|_{[-\xi, \xi]}$; then
\[ \mathrm{supp} \sigma_1 \subset [-\xi, \xi]~, \quad \mathrm{supp} \sigma_2 \cap (-\xi, \xi) = \varnothing~,\]
and the density of $\sigma_1$ is bounded by $A$. Then
\[\begin{split}
F_m(a) &= \int \frac{d\sigma(V(0))\cdots d\sigma(V(m-1))}{|\Delta_m - a \Delta_{m-1}|}  \\
&= \left[\int d\sigma(V(0))\cdots d\sigma_1(V(m-1)) + \int d\sigma(V(0))\cdots d\sigma_2(V(m-1))\right]\\
&\qquad\qquad \frac{1}{|\Delta_m - a \Delta_{m-1}|} \\
&= I_1 + I_2~.
 \end{split}\]
The first integral is equal to
\[ I_1 = \int \frac{d\sigma(V(0)) \cdots d\sigma(V(m-2))}{|\Delta_{m-1}|} \int \frac{d\sigma_1(V(m-1))}{|V(m-1) - a_1|}~, \]
where $a_1 = a + \lambda^{-2} \Delta_{m-2} \Delta_{m-1}^{-1}$ has non-negative imaginary part. Hence by Lemma~\ref{l:bdddens}
\begin{equation}\label{eq:I1}
I_1 \leq 3 A \ln (1 + 1/(A\delta)) F_{m-1}(0) \leq 3 A \ln(1+1/(A\delta)) M_{m-1}~. 
\end{equation}
On the other hand,
\[ I_2 = \int \frac{d\sigma_2(V(m-1))}{|V(m-1) - a|} \int \frac{d\sigma(V(0)) \cdots d\sigma(V(m-2))}{|\Delta_{m-1} - a_2 \Delta_{m-2}|}~. \]
where $a_2 = \lambda^{-2} (V(m-1) - a)$. Then $\Im a_2 \geq 0$ and
\[ |a_2| \leq \frac{\lambda^{-2}}{\xi/2} \leq \xi/2 \]
for sufficiently large $\lambda$. Therefore
\begin{equation}\label{eq:I2}
I_2 \leq \int  \frac{d\sigma_2(V(m-1))}{|V(m-1) - a|} F_{m-1}(a_2) \leq \frac{2}{\xi} M_{m-1}~. 
\end{equation}
Combining (\ref{eq:I1}) with (\ref{eq:I2}), we obtain
\[ M_m \leq (3 A \ln (1 + 1/(A\delta)) + 2\xi^{-1}) M_{m-1}~. \]
\end{proof}

\section{Standard map: proofs}\label{s:pfstd}

Let $z = E+i\delta$, and set
\[\begin{split}
\Delta_3(z; x_0, x_1) &= \Delta_3(z; \omega) = \det \left( \begin{array}{ccc}  
- \cos x_{-1} - z & \frac{1}{\lambda} & 0 \\
\frac{1}{\lambda} & - \cos x_0 - z & \frac{1}{\lambda}\\
0 & \frac{1}{\lambda} & - \cos x_1 - z \ \end{array} \right) \\
&= - \Big[ c_{-1} ( c_0 c_1 - \frac{1}{\lambda^2}) - \frac{1}{\lambda^2} c_1 \Big] \\
&= - c_0 \Big[  c_1 c_{-1} - \frac{1}{\lambda^2 c_0} \left( c_1 + c_{-1} \right) \Big]~,
\end{split}\]
where 
\[\begin{split}
c_1 &= \cos x_1 + z \\
c_0 &= \cos x_0 + z \\
c_{-1} &= \cos x_{-1} + z = \cos (2 x_0 + \lambda \sin x_0 - x_1) + z~.
\end{split}\]
Let $a>0$ be a large number independent of $\lambda$
and define 
\[ A_a = \left\{ (x_0, x_1) \, \mid \, |c_0(x_0, x_1)| \geq a \lambda^{-2} \right\} = A_a^0 \times [-\pi, \pi]~. \]
Let
\[ I = I_\lambda(z; a) = \int_{A_a} |\Delta_3(z; x_0, x_1)|^{-\alpha} dx_0 dx_1~. \]

The next lemma is the main result of this section.
\begin{lemma}\label{l:main.stdmap}
For sufficiently large $a$ and $\lambda$, the following estimates hold:
\begin{enumerate}
\item $I_\lambda(E, a) \leq C_{\alpha}$ for any $\alpha < 2/3$;
\item $I_\lambda(E, a) \leq C_{\alpha}$ if  $\lambda$ is outside a small exceptional set (in 
the sense of Theorem~\ref{thm:stdmap} and the subsequent remark) and $\alpha < 1$.
\end{enumerate}
Here $C_{\alpha}>0$ depends only on $\alpha$.
\end{lemma}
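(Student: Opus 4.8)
The plan is to treat $I_\lambda(E,a)$ as a two-dimensional singular integral and to localize it near the zero set of the integrand. Write $\phi(x_0)=2x_0+\lambda\sin x_0$, so that $c_{-1}=\cos(\phi(x_0)-x_1)+z$. On $A_a$ one has $a\lambda^{-2}\le |c_0|\le C$, and from the factorization $\Delta_3=-c_0\big[c_1c_{-1}-\tfrac{1}{\lambda^2 c_0}(c_1+c_{-1})\big]$ the integrand $|\Delta_3|^{-\alpha}$ is of size $O(1)$ except in a neighborhood of the curves $\{c_1=0\}$ and $\{c_{-1}=0\}$. As $\delta\downarrow 0$ these are, respectively, the horizontal lines $x_1=\pm\arccos(-E)$ and the graphs $x_1=\phi(x_0)\mp\arccos(-E)$. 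I will bound the contribution of a neighborhood of each, integrating first in $x_1$ and then in $x_0$, and add the $O(1)$ bulk. The structural input that drives everything is that $\phi'(x_0)=2+\lambda\cos x_0$ is of order $\lambda$ except near $x_0=\pm\pi/2$.

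Away from $x_0=\pm\pi/2$ the graph $\{c_{-1}=0\}$ meets the lines $\{c_1=0\}$ transversally, with transversality $\asymp|\phi'|\gtrsim 1$, so for such fixed $x_0$ the function $x_1\mapsto\Delta_3$ has only simple zeros and $\mathcal{J}(x_0):=\int|\Delta_3|^{-\alpha}\,dx_1\lesssim |c_0(x_0)|^{-\alpha}$ for $\alpha<1$. Since $|c_0(x_0)|\asymp|\cos x_0+E|$ and the integration is over $\{|c_0|\ge a\lambda^{-2}\}$, the integral $\int_{|c_0|\ge a\lambda^{-2}}|c_0|^{-\alpha}\,dx_0$ is bounded uniformly in $\lambda$ for $\alpha<1$. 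Hence all the difficulty concentrates near the stationary points of $\phi$, i.e.\ where $\cos x_0=-2/\lambda$ (so $x_0\approx\pm\pi/2$), at which $\{c_{-1}=0\}$ has a fold; for $E$ close to $0$ this is also where $c_0$ is smallest.

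The heart of the matter is the local analysis at such a fold, which I expect to be the main obstacle. Letting $x_0^\ast$ solve $\cos x_0^\ast=-2/\lambda$ and $x_1^\ast$ be a band edge, set $r=x_0-x_0^\ast$, $s=x_1-x_1^\ast$ and use $\phi(x_0)=\phi(x_0^\ast)-\tfrac{\lambda}{2}r^2+O(\lambda r^3)$, $c_0\asymp\lambda^{-1}$, $c_1\approx -s\sin x_1^\ast$, and $c_{-1}\approx(\tfrac{\lambda}{2}r^2+s-\mu)\sin x_{-1}^\ast$, where the miss parameter $\mu\asymp\mathrm{dist}\big(\phi(x_0^\ast),\,\Phi_0\big)$ modulo $2\pi$ and $\Phi_0=x_1^\ast+x_{-1}^\ast$ is the corresponding sum of band edges (each $\pm\arccos(-E)$). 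It is essential to retain the correction $\lambda^{-2}(c_1+c_{-1})$: for $|c_0|\sim\lambda^{-1}$ it is comparable to $c_0c_1c_{-1}$ and is exactly what keeps $\Delta_3$ from being as small as the naive product suggests. Rescaling $r,s$ by $\lambda^{-1}$ I expect $\Delta_3\asymp\lambda^{-3}\,G(\lambda r,\lambda s)$ for a function $G$ whose zero set, in the fully resonant case $\mu=0$, is a pair of transverse lines through the origin. The local contribution is then $\asymp\lambda^{3\alpha-2}\int|G|^{-\alpha}$, finite for $\alpha<1$ but bounded uniformly in $\lambda$ only when $\alpha\le 2/3$. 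Getting the powers of $\lambda$ right and checking the nondegeneracy of $G$ after keeping the resonance correction is precisely the delicate point that yields the exponent $2/3$ of part (1).

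For part (2) the improvement to $\alpha<1$ comes from destroying the tangency. When $\mu\asymp 1$, i.e.\ $\phi(x_0^\ast)$ is bounded away from $\Phi_0$ modulo $2\pi$, the rescaled $G$ no longer vanishes at the origin ($c_{-1}$ stays bounded below throughout the fold), and $\{c_1=0\}$, $\{c_{-1}=0\}$ meet only transversally, at $|r|\sim\lambda^{-1/2}$ where $|c_0|\sim\lambda^{-1/2}$; a direct estimate then gives a contribution $\lesssim\lambda^{(\alpha-1)/2}C_\alpha$, uniformly bounded for $\alpha<1$. The same bookkeeping shows the excised resonance $|c_0|\sim a\lambda^{-2}$ is harmless for $\alpha<1$, the $\lambda^{-2}(c_1+c_{-1})$ term forcing $|\Delta_3|\gtrsim\lambda^{-4}$ on the relevant scale. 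Finally $\phi(x_0^\ast)=2x_0^\ast+\lambda\sin x_0^\ast\approx\lambda+\pi$, so the requirement $\mu\asymp 1$ excludes $\lambda$ in short intervals around $\Phi_0-\pi+2\pi\mathbb{Z}$; this is exactly the exceptional set of Theorem~\ref{thm:stdmap}, and for $E=0$ one has $\Phi_0\in\{0,\pi\}$, recovering $A=\{0,\pi\}$. Summing over the finitely many fold/resonance configurations ($x_0\approx\pm\pi/2$, $x_1=\pm\arccos(-E)$, and the two sign choices) together with the $O(1)$ bulk then yields both assertions.
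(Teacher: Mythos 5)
Your geometric picture is the right one, and it parallels the paper's actual argument (inner $x_1$-integral controlled by the separation of the four roots; the danger concentrated at the turning points of $\theta(x_0)=2x_0+\lambda\sin x_0$ near $x_0=\pm\pi/2$; the threshold $\alpha\le 2/3$ from the fully resonant fold; exceptional $\lambda$ near $\{0,\pi\}+2\pi\mathbb{Z}$ for $E=0$). But as a proof it has a genuine gap in both of its load-bearing steps. First, the pointwise claim that away from the folds $\mathcal{J}(x_0)=\int|\Delta_3|^{-\alpha}dx_1\lesssim|c_0(x_0)|^{-\alpha}$ for $\alpha<1$ is false: simplicity of the zeros does not give a uniform bound, because two simple zeros of $x_1\mapsto g$ nearly collide whenever $\bar\theta(x_0)$ is close to $\{0,\pm 2\arccos(-E)\}$, and this happens at $\sim\lambda$ values of $x_0$ spread over all of $[-\pi,\pi]$, not only near $\pm\pi/2$. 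At such $x_0$ the slice integral blows up like $\mathrm{dist}^{-(2\alpha-1)}\bigl(\bar\theta(x_0),\{0,\pm2x^*\}\bigr)$ (this is the content of Lemma~\ref{l:7}), and the whole difficulty of the lemma is to integrate this factor against the weight $|\cos x_0+E|^{-\alpha}$ over $x_0$ — the integral $K$ of Lemma~\ref{l:bound.K}, which the paper handles by decomposing into level-set intervals of $h(x)=\lambda\cos x+2x$ and summing $\sim\lambda$ contributions. You conflate 2D transversality of the curves with root separation in the 1D slice; the conclusion for the bulk is true, but only after this integrated (or genuinely two-dimensional, Jacobian $\asymp|\phi'|$ per crossing) accounting, which your write-up omits. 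Second, the fold analysis — which you yourself identify as "precisely the delicate point" — is left as a scaling ansatz: the nondegeneracy of $G$, the behavior on the growing window $|\rho|\lesssim\lambda^{1/2}$ (equivalently the matching regime $\lambda^{-1}\ll|r|\ll\lambda^{-1/2}$, where $|c_0|\sim|r|$ varies and your normalization $c_0\asymp\lambda^{-1}$ fails), and convergence of $\int|G|^{-\alpha}$ there are all asserted, not proved. The paper does exactly this work explicitly: root asymptotics (\ref{eq:zeroasymp}) show $x_{\ell_0}^\pm=\frac{2}{\lambda}\pm\sqrt{\cdot}$, giving the dichotomy that either both roots are $\gtrsim\lambda^{-1}$ from the weight singularity or are $\gtrsim\lambda^{-1}$ apart (a consequence of the drift $2/\lambda$, i.e.\ the $2x_0$ term — not primarily of the $\lambda^{-2}(c_1+c_{-1})$ correction, whose role on $A_a$ is merely to be uniformly small, $|\epsilon|\le 1/a$), and then the Lagrange bound (\ref{eq:lagr}) plus (\ref{l:integrquadr}) yield $\alpha\le 2/3$. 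Your ansatz reproduces the exponent $3\alpha-2$ but does not substitute for this verification; note also your local model assumes $E\approx 0$ (so that the zero set of $c_0$ sits at the fold), whereas part (1) must hold for all $E\in(-1/2,1/2)$, where the $|c_0|^{-\alpha}$ singularity sits in the transversal region and interacts with resonant crossings there.

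A further concrete mismatch: for part (2) you exclude $\lambda$ with $\mu\asymp 1$, i.e.\ intervals of fixed length around $\Phi_0-\pi+2\pi\mathbb{Z}$. That is too crude to give Theorem~\ref{thm:stdmap}, which needs $\mathrm{meas}\,\{\Lambda(E_0)\cap[2\pi\ell,2\pi(\ell+1)]\}\to 2\pi$, i.e.\ the excluded intervals must shrink. The paper instead keeps $\lambda$ with $|\bar\lambda|\ge\lambda^{-\delta}$ for a small $\delta=\delta(\alpha)>0$ and verifies that the resonant-interval integral (\ref{eq:intl0}) stays bounded for $\alpha<1$ under this much weaker separation, using $|x_{\ell_0}^\pm|\ge c_3\lambda^{-(1+\delta)/2}$; your nonresonant computation ($|c_0|\sim\lambda^{-1/2}$ at a transversal crossing, contribution $\lambda^{(\alpha-1)/2}$) covers only the $\mu\asymp 1$ case and would have to be redone quantitatively in $\mu\ge\lambda^{-\delta}$. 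In short: the skeleton is sound and close in spirit to the paper, but the two estimates that constitute the lemma — the integrated resonance bound in the bulk and the quantitative fold/near-resonance analysis with a $\lambda^{-\delta}$-sized exclusion — are missing, so the proposal does not yet prove either part.
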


\begin{proof}[Proof of Theorem~\ref{thm:stdmap} using Lemma~\ref{l:main.stdmap}]
We apply Proposition~\ref{prop:thj} with $\delta = \lambda^{-2}$. The density 
of states is bounded as follows. Let $G_\omega(z) = (H_\omega - z)^{-1}$; to bound 
$G_\omega(z; 0, 0)$, we introduce the restriction $H_\omega^{(3)}$ of $H_\omega$
to $\{-1,0,1\}$, and denote its Green function by 
$G_\omega^{(3)}(z) = (H_\omega^{(3)} - z)^{-1}$. We have:
\[\begin{split}
G_\omega(z; 0, 0) = G_\omega^{(3)}(z;0,0) 
	&+ \lambda^{-1} G_\omega^{(3)}(z; 0,1) G_\omega(z;2,0)\\
	&+ \lambda^{-1} G_\omega^{(3)}(z; 0,-1) G_\omega(z;-2,0)~.
\end{split}\]
By Cramer's rule,
\[ |G_\omega^{(3)}(z; 0,0)|\leq C| \Delta_3(z; \omega)|^{-1}~;
\quad |G_\omega^{(3)}(z; 0, \pm1)| \leq C \lambda^{-1} | \Delta_3(z; \omega)|^{-1}~. \]
For $\Im z = \delta$, we also have the trivial bound $\|G_\omega\| \leq \delta^{-1} = \lambda^2$. Thus
\begin{multline*} |G_\omega(z; 0, 0)| \\ \leq |G_\omega^{(3)}(z;0,0)|
	+ \frac{1}{\lambda\delta} 
	(|G_\omega^{(3)}(z; 0,1)| + |G_\omega^{(3)}(z; 0,-1)|) \leq C | \Delta_3(z; \omega)|^{-1}~.
	\end{multline*}
 Therefore
\[ \begin{split}
    \int |G_\omega(z; 0,0)|^\alpha d\mu(\omega)
      &\leq C^\alpha \int_{A_a} |\Delta_3(z; \omega)|^{-\alpha} d\mu(\omega) + (1-\mu(A_a)) \lambda^2 \\
      &= 4\pi^2 C^\alpha I_\lambda(z; a) + (1 - \mu(A_a))\lambda^2~.
   \end{split}\]
It is easy to see that $\mu(A_a) \leq C_a \lambda^{-2}$ (since $E$ is away from $\pm 1$). Also,  $|\Delta_3(z; x_0, x_1)|$ is monotone in
$\Im z$, since $\Delta_3(\cdot; \omega)$ has real roots. By Lemma~\ref{l:main.stdmap},
\[ \int |G_\omega(z; 0,0)|^\alpha d\mu(\omega) \leq C' \]
for sufficiently large $a$, under the assumptions of Theorem~\ref{thm:stdmap}. By Lemma~\ref{l:bdrho}
\[ \rho(E - \delta, E+\delta) \leq C'' \delta^\alpha~.\]
By Proposition~\ref{prop:thj} we obtain the statement.
\end{proof}

\begin{proof}[Proof of Lemma~\ref{l:main.stdmap}]

First, we write $I$ as
\[\begin{split}
I &= \int_{A_a^0} \frac{dx_0}{|\cos x_0 + E|^\alpha} \int_{-\pi}^\pi 
	\frac{dx_1}{|c_1 c_{-1} - \frac{1}{\lambda^2 c_0} (c_1 + c_{-1})|^\alpha} \\
  &= \int_{A_a^0} \frac{dx_0}{|\cos x_0 + E|^\alpha} J(E, \theta(x_0), \alpha, \frac{1}{\lambda^2(\cos x_0 + E)})~,
\end{split}\]
where 
\[\theta(x_0) = 2 x_0 + \lambda \sin x_0~,\]
\[ g(x_1) = g(E,\epsilon,\theta; x_1) = (\cos x_1 + E)(\cos(x_1-\theta) + E) - \epsilon((\cos x_1 + E) + (\cos(x_1 - \theta) + E))~,\]
and 
\[ J(E,\theta,\alpha; \epsilon) = \int_{-\pi}^\pi 
	\frac{dx_1}{|g(E,\epsilon,\theta,x_1)|^{\alpha}}~. \]
\begin{lemma}\label{l:7}
There exists $\epsilon_0 > 0$ such that for any $\frac{1}{2} < \alpha < 1$ and $|\epsilon| \leq \epsilon_0$,
\[ J(E, \theta,\alpha,\epsilon) \leq C \mathrm{dist}^{-(2\alpha-1)}(\bar\theta, \{0, \pm 2\arccos (- E)\})~, \]
where $\bar\theta = \theta \mod 2\pi$ and the constant $C$ depends only on $\alpha$.
\end{lemma}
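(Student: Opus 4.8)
The plan is to analyze $J$ through the zero set of the real function $g(\cdot)=g(E,\epsilon,\theta;\cdot)$ on the circle. Write $p(x)=\cos x+E$, so that $g = p(x)\,p(x-\theta) - \epsilon\big(p(x)+p(x-\theta)\big)$. Since (in the regime of Theorem~\ref{thm:stdmap}) the energy $E$ is bounded away from $\pm 1$, the factor $p$ has exactly two simple zeros $x=\pm x^\ast$, where $x^\ast=\arccos(-E)\in(0,\pi)$, and $|p'(\pm x^\ast)|=\sqrt{1-E^2}$ is bounded below by a positive constant. I would first treat the unperturbed case $\epsilon=0$, where $g=p(x)\,p(x-\theta)$ factorizes and its zeros are $\{\pm x^\ast\}$ (from the first factor) together with $\{\theta\pm x^\ast\}$ (from the second). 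A direct check shows that these two pairs of zeros collide precisely when $\theta\in\{0,\pm 2x^\ast\}\pmod{2\pi}$ --- exactly the set appearing in the distance on the right-hand side --- so that $d_\theta:=\mathrm{dist}(\bar\theta,\{0,\pm 2\arccos(-E)\})$ measures the minimal separation between a zero of $p(x)$ and a zero of $p(x-\theta)$.

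Next I would partition the circle into short neighborhoods of the (at most four) zeros of $g$ together with the complementary region. On the complement $|g|$ is bounded below by a constant, so that part contributes $O(1)$ to $J$. Near an \emph{isolated} simple zero $x_0$ the local model is $|g(x)|\gtrsim |x-x_0|$, and since $\alpha<1$ the integral $\int |x-x_0|^{-\alpha}\,dx$ converges to an $\alpha$-dependent constant. The only mechanism that can make $J$ large is the \emph{coalescence} of a zero of $p(x)$ with a zero of $p(x-\theta)$. Near such a pair, writing $u=x-x_0$ and $\eta\asymp d_\theta$ for the separation, the factorization gives $|g|\approx (1-E^2)\,|u|\,|u-\eta|$, and the substitution $u=\eta w$ yields
\[ \int \frac{du}{\big(|u|\,|u-\eta|\big)^{\alpha}} \;=\; \eta^{\,1-2\alpha}\int \frac{dw}{\big(|w|\,|w-1|\big)^{\alpha}}~. \]
For $1/2<\alpha<1$ the $w$-integral converges to a finite constant depending only on $\alpha$ (locally integrable at $w=0,1$ since $\alpha<1$, and decaying like $|w|^{-2\alpha}$ at infinity since $\alpha>1/2$). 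This produces the claimed factor $d_\theta^{-(2\alpha-1)}$; at $\theta\to 0$ two such pairs coalesce simultaneously, but each contributes the same order and the total is absorbed into $C$.

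To pass to $|\epsilon|\le\epsilon_0$ I would exploit the identity $g=(p(x)-\epsilon)(p(x-\theta)-\epsilon)-\epsilon^2$, a near-factorization in which each factor is $p$ with $E$ replaced by $E-\epsilon$. For $\epsilon_0$ small the shifted zeros move by $O(\epsilon)$, the derivative lower bound persists, and away from the coalescing pair the estimates of the unperturbed case survive with constants changed by a bounded factor. In the coalescing window the local model becomes the quadratic $u(u-\eta)-c\,\eta$ with $c=O(\epsilon)$ and $\eta\asymp d_\theta$: when its discriminant is positive the two real roots stay within $O(\eta)$ of the unperturbed ones and the scaling argument applies verbatim, while when the discriminant is negative $g$ has no real zero there and $|g|$ is bounded below, which only improves the estimate.

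The hard part will be precisely this coalescing regime in the presence of $\epsilon$: I need the bound $\eta^{1-2\alpha}$ to hold \emph{uniformly} in $\theta$, in $E$ (away from $\pm 1$), and in $|\epsilon|\le\epsilon_0$, which requires controlling the perturbed quadratic model and the transition between its real-root and complex-root regimes without losing the sharp power of $d_\theta$. A secondary technical point is the simultaneous double coalescence at $\theta\approx 0$, where one must check that the two merging pairs do not interact so as to worsen the exponent.
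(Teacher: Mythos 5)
Your overall route is the paper's: $g$ is a trigonometric polynomial of degree $2$ with four zeros; for $\epsilon=0$ these are $\pm x^*$ and $\theta\pm x^*$, colliding exactly when $\bar\theta\in\{0,\pm 2x^*\}$; isolated simple zeros contribute $O(1)$ since $\alpha<1$; and each coalescing pair is controlled by the scaling bound $\int |x(x-d)|^{-\alpha}\,dx\le C_\alpha |d|^{-(2\alpha-1)}$, which is precisely the paper's estimate (\ref{l:integrquadr}). The genuine gap is in your treatment of $\epsilon\ne 0$ in the coalescing window --- exactly the step you flag as ``the hard part'' --- and both of your claims there are false. Write $s=\sin x^*$ and let $u$ be the local coordinate at the collision. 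For $\bar\theta$ near $0$ the colliding zeros come from factors with \emph{equal} derivatives ($-s$ and $-s$), the local quadratic is $u^2+(2\epsilon/s-\theta)u-\epsilon\theta/s$ with discriminant $\theta^2+4\epsilon^2/s^2\ge\theta^2$: a real $\epsilon$ can only push the roots apart, the separation stays $\ge|\theta|$, and the scaling argument is uniform; this is the computation the paper actually performs in its Case~2 ($|x_1-x_2|\ge C^{-1}|\theta|$). But for $\bar\theta$ near $\pm 2x^*$ the colliding zeros carry derivatives of \emph{opposite} signs ($-s$ at $x^*$, $+s$ at $\theta-x^*$), the local model is $-s^2\left[u(u-\eta)-\epsilon\eta/s\right]$ with $\eta=\theta\mp 2x^*$, and its discriminant $\eta^2+4\epsilon\eta/s$ is \emph{not} bounded below by $c\,\eta^2$ when $\epsilon\eta<0$. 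When it is small and positive, the two real roots are separated by $o(|\eta|)$, so ``the scaling argument applies verbatim'' fails: the window contributes $\asymp(\mathrm{separation})^{1-2\alpha}\gg|\eta|^{1-2\alpha}$. When it is small and negative, $|g|\asymp s^2\bigl((u-u_0)^2+\kappa^2\bigr)$ with $\kappa$ arbitrarily small, so $|g|$ is \emph{not} bounded below, and the window contributes $\asymp\kappa^{1-2\alpha}$; ``only improves the estimate'' is wrong.

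Moreover, this regime is not merely hard: the asserted uniform bound is actually violated there, so no refinement of your quadratic-model analysis can close it. Take $E=0$ (so $2x^*=\pi$), $\theta=\pi+\eta$, and $\epsilon=-\frac12\sin(\eta/2)$, which satisfies $|\epsilon|\le\epsilon_0$ once $\eta$ is small; substituting $v=x_1-\pi/2-\eta/2$ gives the exact identity
\[ g=(\cos v-1)\left(\cos v+\cos^2(\eta/2)\right)~, \]
a genuine double real zero at $v=0$, whence $J=\infty$ for every $\alpha>1/2$ even though $\mathrm{dist}(\bar\theta,\{0,\pm\pi\})=|\eta|>0$; parameters near this curve give $J\gg|\eta|^{-(2\alpha-1)}$ on an open set, so no constant $C$ can work uniformly in $|\epsilon|\le\epsilon_0$. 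For comparison, the paper's own proof carries out only Case~2 in detail (where the equal-sign structure makes the bound legitimate) and dispatches $\bar\theta$ near $\pm 2x^*$ ``by similar reasoning''; the reasoning is in fact not similar because of the opposite derivative signs, and the parenthetical ``$d$ may be complex'' in (\ref{l:integrquadr}) only bounds the contribution by the true root separation $|d|$, which is not comparable to $\mathrm{dist}(\bar\theta,\pm 2x^*)$ uniformly in $\epsilon$. Any correct version must either restrict the sign of $\epsilon\,(\bar\theta\mp 2x^*)$, state the bound in terms of the perturbed collision distance, or exploit in the application the coupling $\epsilon=(\lambda^2(\cos x_0+E))^{-1}$, $\theta=\theta(x_0)$, showing the dangerous coincidence set in $x_0$ has small measure. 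In short: where your proposal can be made rigorous ($\epsilon=0$, and $\bar\theta$ near $0$) it reproduces the paper's argument; where you flagged difficulty, your specific claims fail, and the failure is an obstruction to the statement itself, not a technicality.
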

Note that $J$ does not depend on $\lambda$, hence neither does $\epsilon_0$. We will later apply this lemma with $\epsilon = (\lambda^2 (\cos x_0 + E))^{-1}$.

\begin{proof}Without loss of generality we may assume that $|\theta| \leq \pi$. As a function of $x$, $g(x)$ is 
a trigonometric polynomial of degree $2$; hence it has $4$ zeros. For $\epsilon = 0$,
the zeros are $\pm x^*, \pm x^* + \theta$, where $x^* = \arccos (-E)$. Since  $E \in  (-1/2, 1/2)$ is bounded away
from $\pm 1$, $x^*$ is bounded away from $-x^*$, and $x^* + \theta$ is bounded away from $-x^* + \theta$. Consider 
four cases.

\vspace{2mm}\noindent{\bf Case 1:} $\theta$ is bounded away from $0$, $2x^*$. Then the $4$ zeros of 
$g(E,0,\theta; \cdot)$ are separated; hence the same is true for $g(E,\epsilon,\theta;\cdot)$ when $\epsilon$
is sufficiently small, $|\epsilon| \leq \epsilon_0(\theta)$. Therefore
\[ J(E,\theta,\alpha,\epsilon) \leq C~.\]

\vspace{2mm}\noindent{\bf Case 2:} $\theta$ is close to $0$. Then $x^* + \theta$ is close to $x^*$
and $-x^* + \theta$ is close to $-x^*$. The corresponding zeros $x_{1/2}$ and $x_{3/4}$ of
$g(E,\epsilon,\theta;\cdot)$ satisfy 
\[ |x_1 - x_2| \geq C^{-1} |\theta|~, \quad |x_3 - x_4| \geq C^{-1}|\theta|~,\]
as one can see, for example, plugging the linear approximation
\[ \cos x + E \simeq - \sin x^* (x - x^*)~, \quad \cos(x - \theta) + E \simeq - \sin x^* (x - \theta - x^*)\]
near $x^*$ into the definition of $g$. Therefore
\[ J(E,\theta,\alpha,\epsilon) \leq C|\theta|^{-(2\alpha-1)}~,\]
where we have used the following observation: 
for any $d$ such that $0 < |d| < 1$ ($d$ may be complex) and 
$1/2 < \alpha < 1$,
\begin{equation}\label{l:integrquadr} \int |x(x-d)|^{-\alpha} dx \leq C_\alpha |d|^{-(2\alpha-1)}~. 
\end{equation}

\vspace{2mm}\noindent{\bf Cases 3 and 4}: $\theta$ is close to $\pm 2 x^*$. Then by similar reasoning
\[ J(E,\theta,\alpha,\epsilon) \leq C|\theta \mp 2 x^*|^{-(2\alpha-1)}~.\]
\end{proof}

\vspace{2mm}\noindent
To complete the proof of Lemma~\ref{l:main.stdmap}, we need to estimate
\[\begin{split}
&\int_{-\pi}^\pi \frac{dx_0}{|\cos x_0 + E|^\alpha} 
	\mathrm{dist}^{-(2\alpha-1)}(\bar{\theta}(x_0), \{0, \pm 2x^*\}) \\
&\quad\leq \int_{-\pi}^\pi \frac{dx_0}{|\cos x_0 + E|^\alpha} 
	\mathrm{dist}^{-(2\alpha-1)}(\bar{\theta}(x_0), 0)\\
&\quad\qquad+ \int_{-\pi}^\pi \frac{dx_0}{|\cos x_0 + E|^\alpha} 
	\mathrm{dist}^{-(2\alpha-1)}(\bar{\theta}(x_0), 2x^*) \\
&\quad\qquad+\int_{-\pi}^\pi \frac{dx_0}{|\cos x_0 + E|^\alpha} 
	\mathrm{dist}^{-(2\alpha-1)}(\bar{\theta}(x_0), -2x^*)~. 
\end{split}\]
Let 
\[ K(\lambda, b, E, \alpha)
	= \int_{-\pi}^\pi \frac{dx_0}{|\cos x_0 + E|^\alpha} 
	\mathrm{dist}^{-(2\alpha-1)}(\bar{\theta}(x_0), b)~. \]

\begin{lemma}\label{l:bound.K}
If $E$ is bounded away from $\pm 1$ and $\alpha < 2/3$,
\[ K(\lambda,b,E,\alpha) < C_\alpha~. \]
For  $\lambda$ outside a small exceptional set (satisfying the measure estimate in Theorem~\ref{thm:stdmap}), the 
same estimate holds for any $\alpha < 1$.
\end{lemma}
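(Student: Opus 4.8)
The plan is to separate the two mechanisms that can make the integrand of $K$ large: the factor $|\cos x_0+E|^{-\alpha}$, singular at the two zeros $\pm x^*$ of $\cos x_0+E$ (with $x^*=\arccos(-E)$), and the factor $\mathrm{dist}^{-(2\alpha-1)}(\bar\theta(x_0),b)$, singular wherever the phase $\theta(x_0)=2x_0+\lambda\sin x_0$ meets the level $b$ modulo $2\pi$. Since $\theta'(x_0)=2+\lambda\cos x_0$ vanishes only at the two points $\pm x_c$ with $\cos x_c=-2/\lambda$ (each within $O(\lambda^{-1})$ of $\pm\pi/2$), the crossings of $\theta$ are simple and uniformly separated except in $\lambda^{-1/2}$-neighborhoods of $\pm x_c$, where they pile up. I would therefore split $[-\pi,\pi]$ into a bulk region, on which $|\theta'|\gtrsim\sqrt{\lambda}$, and the two critical neighborhoods $|x_0\mp\pi/2|\lesssim\lambda^{-1/2}$, treating the zeros $\pm x^*$ as a third, possibly overlapping, local feature.

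On the bulk, away from $\pm x^*$, I would change variables to $u=\theta(x_0)$ on each interval of monotonicity. Because $2\alpha-1<1$, the single-period mass $\int_0^{2\pi}\mathrm{dist}^{-(2\alpha-1)}(v,b)\,dv$ is a finite constant; as the potential factor is essentially constant across one period of $\theta$ (whose $x_0$-length is $\sim|\theta'|^{-1}$), summing over periods and using $\int du/|\theta'|=\int\mathrm{sgn}(\theta')\,dx_0$ collapses this part of $K$ to a multiple of $\int_{-\pi}^{\pi}|\cos x_0+E|^{-\alpha}\,dx_0$, which is finite for every $\alpha<1$ since $\pm x^*$ are simple zeros and $E$ is bounded away from $\pm1$. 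This step is uniform in $\lambda$ and $b$ and already disposes of the generic part of the integral for all $\alpha<1$.

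The constraints then come entirely from the three local features. Near a zero $x^*$ lying in the bulk the crossing is simple, $\theta'(x^*)\sim\lambda$, so the two factors combine to $\lambda^{-(2\alpha-1)}|x_0-x^*|^{-(3\alpha-1)}$ when and only when $\theta(x^*)\equiv b\pmod{2\pi}$; this is integrable with a $\lambda$-uniform bound exactly for $\alpha<2/3$. Near a critical point I would insert the quadratic model $\theta(x_0)\approx\theta(x_c)-\tfrac{\lambda}{2}(x_0-x_c)^2$; the worst case $b\equiv\theta(x_c)$ gives $\mathrm{dist}^{-(2\alpha-1)}\approx(\tfrac{\lambda}{2})^{-(2\alpha-1)}|x_0-x_c|^{-2(2\alpha-1)}$, integrable for $\alpha<3/4$, while for $E=0$ (the principal case) the zero $x^*=\pi/2$ lies within $O(\lambda^{-1})$ of $x_c$ and one must rescale $x_0-\pi/2=(2/\lambda)\tau$; this turns the critical contribution into $\lambda^{3\alpha-2}\int|\tau|^{-\alpha}|\tau-1|^{-(4\alpha-2)}\,d\tau$ over $|\tau|\lesssim\sqrt{\lambda}$, and checking the regimes $\tau\to0,1,\infty$ (by an elementary computation, cf.\ (\ref{l:integrquadr})) shows it is $O(1)$ precisely when $\alpha\le2/3$. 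Combining the three bounds yields part~1: $K\le C_\alpha$ uniformly in $b$, $\lambda$ and $E$ for every $\alpha<2/3$.

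For part~2 ($\alpha<1$) the coincidences $\theta(x^*)\equiv b$ and $b\equiv\theta(x_c)$ create genuinely non-integrable singularities once $\alpha\ge2/3$, so I would remove them by constraining $\lambda$. Each of the finitely many conditions $\theta(\pm x^*)\equiv b$, $\theta(\pm x_c)\equiv b$ (with $b\in\{0,\pm2x^*\}$) cuts out a discrete set of resonant $\lambda$; excluding their $\ell^{-\delta}$-neighborhoods keeps $\mathrm{dist}(b,\{\theta(\pm x^*),\theta(\pm x_c)\})\gtrsim\lambda^{-\delta}$, which makes every local singularity integrable with a bound of the form $\lambda^{O(\delta)}$ that stays finite for $\delta$ small, while the excluded set has density $\to0$ as demanded by Theorem~\ref{thm:stdmap}. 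For $E_0=0$ one has $x^*=\pi/2$ and $\theta(\pm x^*),\theta(\pm x_c)\equiv\pm(\pi+\lambda)\pmod{2\pi}$, so all resonances collapse to $\lambda\equiv0,\pi\pmod{2\pi}$, giving the set $A=\{0,\pi\}$ of the remark after Theorem~\ref{thm:stdmap}. The main obstacle is exactly this resonant regime $E\approx0$ at $\pm x_c$, where the potential zero and the critical point nearly collide: extracting the clean power $\lambda^{3\alpha-2}$ for part~1, and pinning down the quantitative margin $\lambda^{-\delta}$ (hence the exceptional set of $\lambda$) for part~2, is the delicate part of the argument.
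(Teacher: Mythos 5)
Your proposal is correct and is essentially the paper's own argument in different packaging: your bulk averaging over periods of $\theta$ corresponds to the paper's summation over the level-set intervals $I_\ell^\pm$, your quadratic model at $x_c$ and the rescaling $x_0-\pi/2=(2/\lambda)\tau$ reproduce the paper's Taylor expansion (\ref{eq:tay}), the Lagrange bound (\ref{eq:lagr}) and the estimate (\ref{l:integrquadr}) on the doubly-resonant interval $I_{\ell_0}$, and your resonance conditions $\theta(\pm x^*),\theta(\pm x_c)\equiv b \pmod{2\pi}$ excluded with $\lambda^{-\delta}$-margins are exactly the paper's exclusion $|\bar\lambda|\ge \lambda^{-\delta}$ (collapsing to $A=\{0,\pi\}$ at $E_0=0$), with identical thresholds $\alpha<2/3$ and $\alpha<1$. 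One small caution: the local bounds in part 2 are not literally ``$\lambda^{O(\delta)}$'' (which diverges as $\lambda\to\infty$ for any fixed $\delta>0$) but of the form $\lambda^{O(\delta)-c(\alpha)}$ with $c(\alpha)>0$, so one must choose $\delta=\delta(\alpha)$ small enough that the net exponent is nonpositive --- precisely the paper's ``provided $\delta=\delta(\alpha)>0$ is sufficiently small''.
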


\begin{proof}
For simplicity of presentation, we first discuss in detail the case $E = 0$ and $b = 0$, and then comment on
the modifications needed for the general case. Setting
$x = \pi/2 - x_0$,
\begin{equation}\label{eq:k00}\begin{split}
K(\lambda, \alpha) &= K(\lambda,0,0,\alpha)
	= \int \frac{dx}{|\sin x|^\alpha}  
	\mathrm{dist}^{-(2\alpha-1)}(\bar{h}(x), 0)\\
    &= \int  \frac{dx}{|\sin x|^\alpha |\bar{h}(x)|^{2\alpha-1}}   ~, \end{split}\end{equation}
where
\[ h(x) = \lambda \cos x + 2x~, \]
and as before $\bar\bullet$ denotes reduction modulo $2\pi$. Now we need to estimate the integral over $[-\pi/2, \pi/2]$ and over $[-\pi, \pi] \setminus [-\pi/2,\pi/2]$. We show the argument for the former, since the argument for the latter is identical.

Decompose
\[ [-\pi/2,\pi/2] = \bigcup_{|\ell| \leq \left\lceil \frac{\lambda + \pi}{2\pi}\right\rceil} \left\{ (2\ell-1)\pi \leq h(x) \leq (2\ell+1)\pi\right\}~.\]
All but one set in this decomposition are unions of two intervals which do not contain the origin ($0$) 
with a zero (root) of $\bar{h}$ is each. One is a single interval (a neighborhood of $0$) with two zeros
of $\bar{h}$.

First consider the intervals
\[ I_\ell^+ \cup I_\ell^- = \left\{ (2\ell-1)\pi \leq h(x) \leq (2\ell+1)\pi\right\}  \]
which do not contain the origin. If $\ell$ is such that $|\lambda - 2\pi \ell| \geq c\lambda$,
where $c > 0$ is a small constant, we have the following estimates for $x \in I_\ell^\pm$:
\[ |x| \geq c_1~,\quad|h'(x)| \geq c_1\lambda~,\]
which imply that
\[ \sum_{|\lambda - 2\pi \ell| \geq c\lambda} \int_{I_\ell^+ \cup I_\ell^-} \frac{dx}{|\sin x|^\alpha |\bar{h}(x)|^{2\alpha - 1}}\]
is bounded for any $\alpha < 1$.

For the intervals with $|\lambda - 2\pi \ell| < c\lambda$ which do not contain the origin, we use the Taylor expansion
\begin{equation}\label{eq:tay} h(x) = \lambda ( 1 - \frac{x^2}{2} + O(x^4) ) + 2x~, \end{equation}
which implies that the zeros $x_\ell^\pm \in I_\ell^\pm$ of $\bar{h}$ are given by
\begin{equation}\label{eq:zeroasymp}
x_\ell^\pm = \frac2\lambda \pm \sqrt{\frac4{\lambda^2} 
	+ 2\,\frac{\lambda-2\pi\ell}\lambda + \epsilon_\ell^\pm}~, \quad
	|\epsilon_\ell^\pm| \leq C |x_\ell^\pm|^4~. 
\end{equation}
This relation yields the estimates
\[ |x| \geq \frac{c_2 |\lambda - 2\pi \ell|^{1/2}}{\lambda^{1/2}}~, \quad
|h'(x)| \geq c_2 |\lambda - 2\pi \ell|^{1/2}\lambda^{1/2} \]
for $x \in I_\ell^\pm$, as well as the bound 
\[ \mathrm{meas}\, I_\ell^\pm \leq 1 / (c_2 |\lambda - 2\pi \ell|^{1/2}\lambda^{1/2})~.\]
Hence 
\[ \int_{I_\ell^\pm} \frac{dx}{|\sin x|^\alpha |\bar{h}(x)|^{2\alpha - 1}} \leq \frac{C}{|\lambda - 2\pi \ell|^{\frac{1+\alpha}{2}} \lambda^{\frac{1-\alpha}{2}}}~. \]
Since $2\pi \ell$ is bounded away from $\lambda$ for intervals not containing the origin,
the sum of all these integrals is also bounded for any $\alpha < 1$. (This estimate is valid for any large $\lambda$, and the properties of $\lambda$ play a role only in the estimate for the interval with two zeros, which
we consider next.)

Finally, consider the interval $I_{\ell_0}$ containing two zeros $x_{\ell_0}^\pm$ of $\bar{h}$ (observe
that $\lambda - 2\pi \ell_0 = \bar{\lambda}$). From the Lagrange interpolation formula
\[ h(x) = h(x_{\ell_0}^+) \frac{x - x_{\ell_0}^-}{x_{\ell_0}^+-x_{\ell_0}^-} +  
h(x_{\ell_0}^-) \frac{x - x_{\ell_0}^+}{x_{\ell_0}^--x_{\ell_0}^+} + \frac{1}{2} h''(\tilde{x}) (x - x_{\ell_0}^+)(x - x_{\ell_0}^-)\]
($\tilde{x}$ lies in the interval spanned by $x$ and $x_{\ell_0}^\pm$) we have:
\begin{equation}\label{eq:lagr} |\bar{h}(x)| \geq \frac{1}{3} \lambda |x - x_{\ell_0}^+| |x - x_{\ell_0}^-|\end{equation}
for $x \in I_{\ell_0}$. Also the relation (\ref{eq:zeroasymp}) is still valid. If $|\bar{\lambda}| \geq \lambda^{-\delta}$, we have:
\[ |x_{\ell_0}^\pm| \geq c_3 / \lambda^{\frac{1+\delta}{2}}~,\]
whence the integral
\begin{equation}\label{eq:intl0}\int_{I_{\ell_0}} \frac{dx}{|\sin x|^\alpha |\bar{h}(x)|^{2\alpha - 1}} \end{equation}
is bounded for any $\alpha < 1$, provided that $\delta = \delta(\alpha)>0$ is sufficiently small. 

This establishes Theorem 1 with $\beta <2$  for $E=0$ for $\lambda \in \Lambda  = \left\{ |\bar\lambda| \ge \lambda^{-\delta} \right\}$.   Note  that $\bar\lambda\approx 0$  corresponds to the presence of an elliptic island. 

\vspace{2mm}\noindent
Without
restrictions on $\lambda$, (\ref{eq:zeroasymp}) implies that one has either 
\[ |x_{\ell_0}^+| \geq c/\lambda \quad \text{and} \text  |x_{\ell_0}^-| \geq c / \lambda\]
or
\[ |x_{\ell_0}^+ -x_{\ell_0}^-| \geq c/\lambda~.\]
In both cases, the integral (\ref{eq:intl0}) is bounded for $\alpha < 2/3$. For example, in the first case
(\ref{eq:lagr}) and the Cauchy--Schwarz inequality imply
\[ \begin{split}\int_{I_{\ell_0}} \frac{dx}{|\sin x|^\alpha |\bar{h}(x)|^{2\alpha - 1}}
&\leq \frac{C}{\lambda^{2\alpha-1}} \int_{I_{\ell_0}} \frac{dx}{|x|^{\alpha} |x-x_{\ell_0}^+|^{2\alpha-1}
|x-x_{\ell_0}^-|^{2\alpha-1}}\\
&\leq \frac{C}{\lambda^{2\alpha-1}} \int_{I_{\ell_0}} \frac{dx}{|x|^{\alpha} |x-c/\lambda|^{4\alpha-2}}~;
\end{split}\]
for $\alpha \leq 2/3$, the right-hand side does not
exceed
\[  \frac{C}{\lambda^{2\alpha-1}} \int_{I_{\ell_0}} \frac{dx}{|x|^{\alpha} |x-c/\lambda|^{\alpha}}~,\]
and this expression is bounded by 
a constant
due to
the estimate (\ref{l:integrquadr}).
This concludes  the proof of the lemma for the case $E = b = 0$. 

\vspace{2mm}\noindent
In the general case, set $\widetilde{x}^* = \pi - x^*$;
then (for $E$ away from $\pm 1$)
\[ |x_0 + E| \geq \frac{1}{C} \min(|x_0 - \widetilde{x}^*|, |x_0 + \widetilde{x}^*|)~, \]
hence we may replace the term $\cos x_0 + E$
in the denominator with $|x_0 \pm  \widetilde{x}^*|$.
Letting $x = x_0 \pm  \widetilde{x}^*$, we obtain
an integral similar to (\ref{eq:k00}), with the function
\[ 2 (x_0 \mp  \widetilde{x}^*) + \lambda \sin (x_0 \mp  \widetilde{x}^*)\]
in place of $h$.
The form of the Taylor expansion (\ref{eq:tay}) and its corollary
(\ref{eq:zeroasymp}) is similar to that in the case $b = E = 0$. The final step of the argument requires
no modification.
\end{proof}
\renewcommand{\qedsymbol}{}\end{proof}

\textbf{Acknowledgements:} We wish to thank J.~Bourgain, L.~Carleson, and Ya.~Sinai  for many helpful discussions.

\end{document}